\author{Ivor van der Hoog}{Department of Information and Computing Sciences, Utrecht University, the Netherlands}{i.d.vanderhoog@uu.nl}{}{Supported by the Netherlands Organisation for Scientific Research (NWO); 614.001.504.}
\author{Irina Kostitsyna}{Department of Mathematics and Computer Science, TU Eindhoven, the Netherlands}{i.kostitsyna@tue.nl}{}{}
\author{Maarten L\"{o}ffler}{Department of Information and Computing Sciences, Utrecht University, the Netherlands}{m.loffler@uu.nl}{}{Partially supported by the Netherlands Organisation for Scientific Research (NWO); 614.001.504.}
\author{Bettina Speckmann}{Department of Mathematics and Computer Science, TU Eindhoven, the Netherlands}{b.speckmann@tue.nl}{}{Partially supported by the Netherlands Organisation for Scientific Research (NWO); 639.023.208.}
\authorrunning{I. van der Hoog, I. Kostitsyna, M.  L\"{o}ffler, B. Speckmann} %mandatory. First: Use abbreviated first/middle names. Second (only in severe cases): Use first author plus 'et. al.'
\subjclass{F.2.2 Nonnumerical Algorithms and Problems}
\keywords{preprocessing, imprecise points, entropy, sorting, proximity structures}% mandatory: Please provide 1-5 keywords
\newcommand{\etal}{\textit{et al.}\xspace}
\let\emptyset\varnothing
\newcommand{\RR}{\ensuremath{\mathcal R}\xspace}
\newcommand{\Oh}{\ensuremath{\mathcal O}\xspace}
\newcommand{\T}{\ensuremath{e}\xspace}% number of linear extensions
\newcommand{\SO}{\ensuremath{A}\xspace}% Ambiguity
\newcommand{\CS}{\ensuremath{\Gamma^\pi}\xspace}% contact set
\newcommand{\RA}{\ensuremath{\bot^{\pi}}\xspace}% right anchor
\title{Preprocessing Ambiguous Imprecise Points}
\begin{document}

\maketitle

\begin{abstract}
Let $\RR = \{R_1, R_2, \ldots, R_n\}$ be a set of regions and let $
X = \{x_1, x_2, \ldots, x_n\}$ be an (unknown) point set with $x_i \in R_i$. Region $R_i$ represents the uncertainty region of $x_i$. We consider the following question: how fast can we establish order if we are allowed to preprocess the regions in \RR?
The \emph {preprocessing model} of uncertainty uses two consecutive phases: a preprocessing phase which has access only to $\RR$ followed by a reconstruction phase during which a desired structure on $X$ is computed. Recent results in this model parametrize the reconstruction time
by the \emph{ply} of $\RR$, which is the maximum overlap between the regions in $\RR$. We introduce the \emph{ambiguity} $\SO(\RR)$ as a more fine-grained measure of the degree of overlap in $\RR$.
We show how to preprocess a set of $d$-dimensional disks in $\Oh(n \log n)$ time such that we can sort $X$ (if $d=1$) and reconstruct a quadtree on $X$ (if $d\geq 1$ but constant) in $\Oh(\SO(\RR))$ time. If $\SO(\RR)$ is sub-linear, then reporting the result dominates the running time of the reconstruction phase. However, we can still return a suitable data structure representing the result in $\Oh(\SO(\RR))$ time.  

In one dimension, $\RR$ is a set of intervals and the ambiguity is linked to interval entropy, which in turn relates to the well-studied problem of sorting under partial information. The number of comparisons necessary to find the linear order underlying a poset $P$ is lower-bounded by the graph entropy of $P$.
We show that if $P$ is an interval order, then the ambiguity provides a constant-factor approximation of the graph entropy. This gives a lower bound of $\Omega(\SO(\RR))$ in all dimensions for the reconstruction phase (sorting or any proximity structure), independent of any preprocessing; hence   
 our result is tight. Finally, our results imply that one can approximate the entropy of interval graphs in $\Oh(n \log n)$ time, improving the $\Oh(n^{2.5})$ bound by Cardinal~\etal
 \end{abstract}
 
%\tableofcontents

 \section{Introduction}

A fundamental assumption in classic algorithms research is that the input data given to an algorithm is exact. Clearly this assumption is generally not justified in practice: real-world data tends to have (measurement or labeling) errors, heterogeneous data sources introduce yet other type of errors, and ``big data'' is compounding the effects. To increase the relevance of algorithmic techniques for practical applications, various paradigms for dealing with uncertain data have been introduced over the past decades. Many of these approaches have in common that they represent the uncertainty, imprecision, or error of a data point as a \emph{disk} in a suitable distance metric which we call an uncertainty region. We focus on a fundamental problem from the realm of computation with uncertainties and errors: given a set of imprecise points represented by uncertainty regions, how much proximity information do the regions contain about the imprecise points?

\subparagraph*{Preprocessing model.} We study this problem within the preprocessing framework initially proposed by Held and Mitchell \cite{held2008triangulating}. In this framework we have a set $\RR = \{R_1, R_2, \ldots, R_n\}$ of regions and an point set $X = \{x_1, x_2, \ldots, x_n\}$ with $x_i \in R_i$
This model has 2 consecutive phases: a preprocessing phase followed by a reconstruction phase. In the preprocessing phase we have access only to $\RR$ and we typically want to preprocess $\RR$ in $\Oh(n \log n)$ time to create some linear-size auxiliary data structure which we will denote by $\Xi$. In the reconstruction phase, we have access to $X$ and we want to construct a desired output on $X$ using $\Xi$ faster than would be possible otherwise. 
L{\"o}ffler and Snoeyink~\cite{loffler2010delaunay} were the first to use this model as a way to deal with data uncertainty: one may interpret the regions $\RR$ as {\em imprecise} points, and the points in $X$ as their true (initially unknown) locations. This interpretation of the preprocessing framework
has been successfully applied to various problems in computational geometry~\cite {buchin2009delaunay,buchin2011delaunay,devillers2011delaunay,ezra2013convex, loffler2013unions,van2010preprocessing}. Several results restrict $\RR$ to be a set of disjoint (unit) disks in the plane, while others consider partially overlapping disks. Traditionally, the \emph{ply} $\Delta(\RR)$ of $\RR$, which measures the maximal number of overlapping regions, has been used to measure the degree of overlap, leading, for example, to reconstruction times of $O(n \log \Delta(\RR))$.

\begin{figure}[b]
\centering
\includegraphics{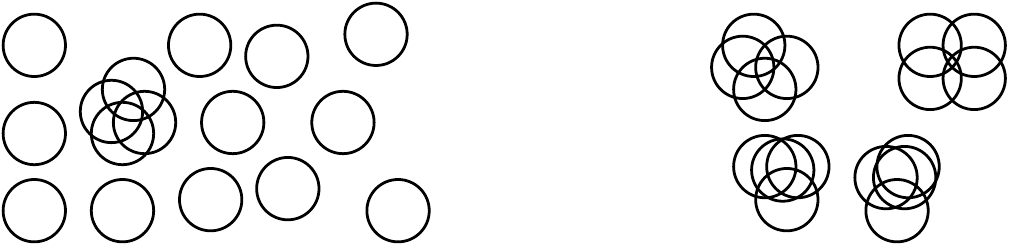}
%\captionsetup{width=\linewidth}
\caption{Two sets of 16 disks each in the plane, both with a ply of 4. The ambiguity of the set on the right is four times as large as the ambiguity of the set on the left. }
\label{fig:lowply}
\end{figure}

The ply is arguably a somewhat coarse measure of the degree of overlap of the regions. Consider the following example: suppose that we have a collection of $\sqrt{n}$ disks in the plane that overlap in one point and that the remainder of $\RR$ is mutually disjoint (see Figure~\ref{fig:lowply} left). Then $\Delta(\RR) = \sqrt{n}$ and the resulting time complexity of the reconstruction phase is $\Oh(n \log n)$ even though it might be possible to achieve better bounds ($\RR$ is arguably not in a worst-case configuration for that given ply, see Figure 1 right).

\subparagraph*{Ambiguity.} We introduce the \emph{ambiguity} $\SO(\RR)$ as a more fine-grained measure of the degree of overlap in $\RR$. The ambiguity is based on the number of regions each individual region intersects (see Figure~\ref{fig:lowply}). We count this number with respect to particular permutations of the regions: for each region we  count only the overlap with regions that appear earlier in the permutation. A proper technical definition of ambiguity can be found in Section~\ref{sec:ambiguity}. 
We also show how to compute a $3$-approximation of the ambiguity in $\Oh(n \log n)$ time.  

\subparagraph*{Ambiguity and entropy.} In one dimension, $\RR$ is a set of intervals and the ambiguity is linked to interval (and graph) entropy (see Appendix~\ref{appx:entropy} for a definition), which in turn relates to the well-studied problem of sorting under partial information. Fredman~\cite{fredman1976good} shows that if the only information we are given about a set of values is a partial order $P$, and $e(P)$ is the number of {\em linear extensions} (total orders compatible with) of $P$, then we need at least $\Omega(\log \T(P))$ comparisons to sort the values. Brightwell and Winkler prove that computing the number of linear extensions $\T(P)$ is $\#P$-complete~\cite{brightwell1991counting}. Hence efforts have concentrated on computing approximations, most notably via the concept of graph entropy as introduced by K\"{o}rner~\cite{korner1973coding}. Specifically, Khan and Kim~\cite{kahn1995entropy} prove that $\log \T(P) = \Theta(n \cdot H(G))$ where $H(G)$ denotes the entropy of the \emph{incomparability graph} $G$ of the poset $P$. To the best of our knowledge there is currently no exact algorithm to compute $H(G)$. Cardinal~\etal~\cite{cardinal2013sorting} describe the fastest known algorithm to approximate $H(G)$, which runs in $\Oh(n^{2.5})$ time. See Appendix~\ref{appx:entropy} for a more in-depth discussion of sorting and its relation to graph entropy. 

% Almost 20 years later, Kahn and Kim~\cite {kahn1995entropy} described the first polynomial-time algorithm for computing a sequence of comparisons that actually performs the sorting and matches the lower bound. Since then, the running time of this algorithm has been steadily improved, first by Brightwell~\cite{brightwell1999balanced} and then by Cardinal~\etal\cite{cardinal2013sorting} who, given a partial order $P$, after $O(n^{2.5})$ time preprocessing, can construct the sorted order of the values using $O(\log e(P) )$ comparisons in total $O(n +\log e(P) )$ running time.

We consider the special case where the partial order is induced by uncertainty intervals. We define the entropy $H(\RR)$ of a set of intervals as the entropy of their intersection graph (which is also an incomparability graph) using the definition of graph entropy given by K\"{o}rner. In this setting we prove that the ambiguity $\SO(\RR)$ provides a constant-factor approximation of the interval entropy (see Section~\ref{sec:ambiguity}). Since we can compute a constant-factor approximation of the ambiguity in $\Oh(n \log n)$ time, we can hence also compute a constant-factor approximation of the entropy of interval graphs in $\Oh(n \log n)$ time, thereby improving the result by Cardinal~\etal~\cite{cardinal2013sorting} for this special case.
 
\subparagraph*{Ambiguity and reconstruction.} Since $\Omega(\log \T(P))$ is a lower bound for the number of comparisons needed to complete $P$ into a total order, $\Omega(\SO(\RR))$ is a lower bound for the reconstruction phase in the preprocessing model when $\RR$ is a set of intervals and the goal is to sort the unknown points in $X$. This lower bound extends to higher dimensions and to proximity structures in general, independent of any preprocessing.

The ambiguity $\SO(\RR)$ ranges between $0$ and $\Theta(n \log n)$ for a set of $n$ regions $\RR$. If the value of $\SO(\RR)$ lies between $\Theta(n)$ and $\Theta(n \log n)$ then we can preprocess $\RR$ in $\Oh(n \log n)$ time and sort in $\Oh(\SO(\RR))$ time (in one dimension for arbitrary intervals) or build a quadtree in $\Oh(\SO(\RR))$ time (in all dimensions for unit disks).

If the ambiguity lies between $0$ and $\Theta(n)$, then reporting the results explicitly in $\Omega(n)$ time dominates the reconstruction time. But the ambiguity suggests that the information-theoretic amount of work necessary to compute the results should be lower than $\Theta(n)$. To capture this, we hence introduce a new variant of the preprocessing model, which allows us to return a pointer to an implicit representation of the results.

Specifically, in one dimension, $\RR$ is a set of intervals and we aim to return the sorted order of the unknown points in $X$. If, for example, all intervals are mutually disjoint, then $\SO(\RR) = \mathcal{O}(1)$ and we have essentially no time for the reconstruction phase. However, a binary search tree $T$ on $\RR$, which we can construct in $\Oh(n \log n)$ time in the preprocessing phase, actually captures all necessary information. In the reconstruction phase we can hence return a pointer to $T$ as an implicit representation of the sorted order. In Section~\ref{sec:sorting} we show how to handle arbitrary sets of intervals in a similar manner. That is, we describe how to construct in $\Oh(n \log n)$ time an auxiliary data structure $\Xi$ on \RR in the preprocessing phase (without access to $X$), such that, in the reconstruction phase (using $X$), we can construct a linear-size AVL-tree $T$ on $X$ in $\Oh(\SO(\RR))=\Oh(\log e(\RR) )$ time, which is tight. 

In all dimensions, we consider $\RR$ to be a set of unit disks and our aim is to return a quadtree $T$ on the points in $X$ where each point in $X$ lies in a unique quadtree cell. 
Note that in 2 dimensions, $T$ also allows us to construct e.g. the Delaunay triangulation of $X$ in linear time~\cite{buchin2009delaunay}.
However, we show that constructing such a quadtree explicitly in $\Oh(\SO(\RR))$ time is not possible, and the work necessary to distinguish individual points could dominate the running time and overshadow the detail in the analysis brought by the ambiguity measure. 
We hence follow Buchin~\etal~\cite{buchin2009delaunay} and use so-called \emph{$\lambda$-deflated quadtrees} which contain up to a constant $\lambda$ points in each leaf. From $T$ one can construct a quadtree on $X$ where each point lies in a unique quadtree cell in linear time. In Section~\ref{sec:quadtrees} we describe how to reconstruct a linear-size $\lambda$-deflated quadtree $T$ (with a suitable constant $\lambda$) in $\Oh(\SO(\RR))=\Oh(\log e(\RR) )$ time, which is tight (in fact, in one dimension our result also extends to non-unit intervals). 
%Furthermore, $T$ allows us to construct the Delaunay triangulation and the convex hull of $X$ in linear time as well~\cite{buchin2009delaunay}. 

% \begin{table}[H]
% \begin{tabular}{ | l | l | l | l |}
% \hline
% Input &
% Desired Output &
% Realized output if $\SO(\RR)$ is sub-linear &
% \shortstack[l]{Reconstruction \\ runtime} \\ \hline
% %
% $\RR \subset \R^1$ &
% A sorting of $X$ &
% An AVL-tree on P & $\Theta(\SO(\RR))$\\ \hline
% %
% $\RR \subset \R^1$ &
% A minimal-size $T$ on $X$  &
% An $\Oh(n)$-size 2-deflated qtree containing $T$ &
% $\Theta(\SO(\RR))$\\ \hline
% %
% $\RR \subset \R^d$, unit size &
% A minimal-size qtree $T$ on $X$  &
% An $\Oh(n)$-size $K_d$-deflated qtree containing $T$ & $\Theta(\SO(\RR))$ \\ \hline
% \end{tabular}
% \end{table}

\section{Ambiguity}\label{sec:ambiguity}

%The ambiguity is a measure on $\RR$ with respect to a permutation $\pi$ of $\RR$. We define the ambiguity of a set of regions $\RR$ with respect to a permutation $\pi$ and we define the global ambiguity of $\RR$ as the minimal ambiguity over all $\pi$. We continue to show two properties of this ambiguity: (1) the permutation $\pi$ has a great influence of the ambiguity of $\RR$ with respect to $\pi$; (2) there is a class of permutations which we call \emph{containment-compatible permutations} and we show that if $\RR$ is a set of intervals then ambiguity of $\RR$ with respect to any containment-compatible permutation $\pi$ is a constant-factor approximation of the interval entropy. Lastly we note that a containment-compatible permutation (and with it, a constant-factor approximation of the interval entropy) can be computed in $\Oh(n \log n)$ time. 
We introduce a new measure on a set of regions \RR to reflect the degree of overlap, which we call the \emph{ambiguity}.
The sequence in which we process regions matters (refer to Section~\ref{sec:lowerbound}), thus we distinguish between the \emph{$\pi$-ambiguity} defined on a given permutation of the regions in $\RR$, and the minimum ambiguity defined over all possible permutations. 
We demonstrate several properties of the ambiguity, and discuss its relation to graph entropy when $\RR$ is a set of intervals in one dimension. 
% \paragraph*{Permutation orders.}
% later in this paper we introduce the concept of \emph{ambiguity} as a measure on a configuration of intervals \RR, meant to capture the number of operations needed to sort $X$ using the output of the preprocessing phase.
% Intuitively, the number of operations depends on the order in which we process the elements of \RR.
% Therefore, we define the ambiguity with respect to a permutation (linear order) on \RR.
% Let $\pi$ be a permutation on $n$ elements.
% Let $\RR^\pi = \langle R_{\pi(1)}, R_{\pi(2)}, \ldots, R_{\pi(n)}\rangle$ be the sequence of elements in $\cal R$ in order $\pi$. By $\RR^\pi_{\le i} := \{R_j \mid \pi(j) \le i \}$ we denote the prefix of \RR. That is the set of intervals in $\RR$ which come before and include the $i$'th interval in the order $\pi$.
%
\subparagraph*{Processing permutation.}
Let $\RR$ be a set of $n$ regions and let $\RR^\pi = \langle R_{1}, R_{2}, \ldots, R_{n}\rangle$ (note that for all $i$, the region $R_i$ could be any region depending on the permutation $\pi$) be the sequence of elements in $\RR$ according to a given permutation $\pi$. Then we say that $\pi$ is a \emph{processing permutation} of $\RR$. Furthermore, let $\RR^\pi_{\le i} := \{R_j \mid j \leq i \}$ be the prefix of $\RR^\pi$, that is, the first $i$ elements in the sequence $\RR^\pi$. A permutation $\pi$ is \emph{containment-compatible} if $R_i \subset R_j$ implies $i<j$ for all $i$ and $j$ \cite{ft98}. When $\pi$ is clear from context, we denote $\RR^\pi$ by $\RR$.
%Let $\pi$ be a permutation on $\RR$. We say that $\pi$ is a \emph{processing permutation} and we denote by $\RR^\pi = \langle R_{\pi(1)}, R_{\pi(2)}, \ldots, R_{\pi(n)}\rangle$ the sequence of elements in $\RR$ in order $\pi$. That is the set of regions in $\RR$ which come before and include the $i$'th region in the order $\pi$. We say that a permutation $\pi$ is \emph{containment-compatible} if for all $R_i \subset R_j$, $i < j$ \cite{ft98}.
\subparagraph*{Contact set (for a permutation $\pi$).}
For a region $R_i\in\RR^\pi$ we define its \emph{contact set} $\CS_i$ to be the set of regions which precede or are equal to $R_i$ in the order $\pi$, and which intersect $R_i$:
$\CS_i := \{R_j\in\RR^\pi_{\le i} \mid R_j\cap R_i\not=\emptyset\}$.
Note that a region is always in its own contact set. A region $R_i$ whose contact set $\CS_i$ contains only $R_i$ itself is called a \emph{bottom region} (refer to Figure~\ref{fig:intersectiongraph}). 
%If $\pi$ and $\RR$ are clear from the context, we will simply denote the contact set of $R_i$ with $\CS_i$.
%
\subparagraph*{Ambiguity.} For a set of regions \RR and a fixed permutation $\pi$ we define the $\pi$-\emph{ambiguity} $\SO^\pi(\RR) := \sum_i \log |\CS_i|$ (with the logarithm to the base 2). Observe that bottom regions do not contribute to the value of the $\pi$-ambiguity. The ambiguity of $\RR$ is now the minimal $\pi$-ambiguity over all permutations $\pi$, $\SO(\RR) := \min_{\pi \in \Pi} \SO^\pi(\RR)$.
\begin{figure}[t]
\centering
\includegraphics[page=3]{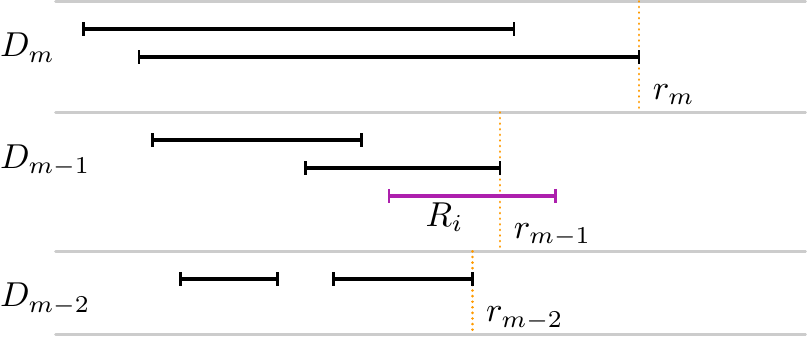}
\captionsetup{width=\linewidth}
\caption{A set of overlapping intervals with a permutation. In this figure $\CS_4 = \{R_1, R_2, R_3, R_4 \}$. In all figures, bottom intervals are indicated in blue (in this case this is only $R_1$).}
\label{fig:intersectiongraph}
\end{figure}
\subsection{Properties of ambiguity}
\label{sec:lowerbound}

We show the following properties of ambiguity: (1) the $\pi$-ambiguity may vary significantly with the choice of the processing permutation $\pi$, (2) in one dimension, the $\pi$-ambiguity for any containment-compatible permutation $\pi$ on a set of intervals \RR implies a 3-approximation on the entropy of the interval graph of \RR, and (3) the permutation that realizes the ambiguity is containment-compatible. Therefore in one dimension, the ambiguity of a set of intervals \RR implies a 3-approximation of the entropy of the interval graph of \RR.

We start with the first property: it is easy to see that the processing permutation $\pi$ has a significant influence on the value of the $\pi$-ambiguity
%. Consider a set \RR of $n$ intervals, where $n-1$ intervals are mutually disjoint, and the last interval contains all the others 
(refer to Figure~\ref{fig:ambiguity}). 
%Let $\pi_1$ be a permutation where the short intervals precede the long interval. Then the contact set of each short interval contains only the interval itself, and each short interval has a contribution of $\log 1 = 0$ to the $\pi_1$-ambiguity. The contact set of the long interval consists of all the intervals in \RR, and its contribution to the $\pi_1$-ambiguity is $\log n$. Thus, the $\pi_1$-ambiguity  $\SO^{\pi_1}(\RR)=\log n$. Now, let $\pi_2$ be a permutation where the long interval precedes all the short intervals. Then the contact set of each short interval has a size two, and each short interval has a contribution of $\log 2 = 1$ to the $\pi_2$-ambiguity. The contact set of the long interval contains only the interval itself, thus the $\pi_2$-ambiguity $\SO^{\pi_2}(\RR)=n-1$.
%
Even though $\pi$-ambiguity can vary considerably, we show that if we restrict the permutations to be containment-compatible, 
%the range of their $\pi$-ambiguities narrows down to within only a constant factor of the ambiguity. 
their $\pi$-ambiguities lie within a constant factor of the ambiguity. 

\begin{figure}[t]
\centering
\includegraphics[]{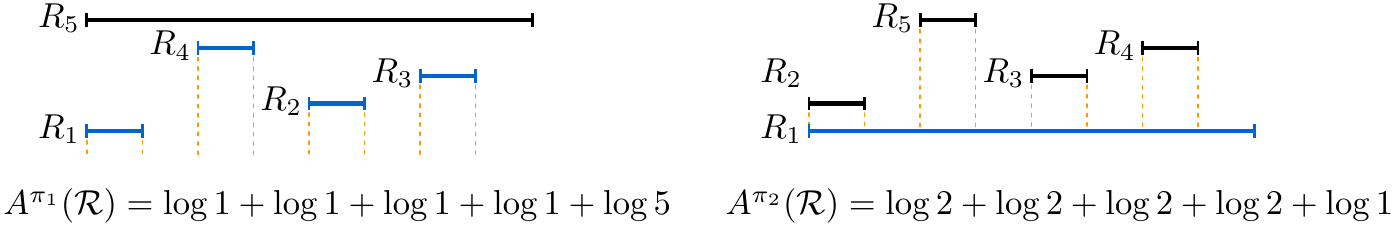}
\captionsetup{width=\linewidth}
\caption{An example of the $\pi$-ambiguity induced by two permutations $\pi_1$ (on the left) and $\pi_2$ (on the right) of the same five intervals. The $\pi_1$-ambiguity is $\log 5$ and the $\pi_2$-ambiguity is $4$.}
\label{fig:ambiguity}
\end{figure}

\subparagraph*{Interval entropy.}
The entropy of a graph $G$ was first introduced by K\"{o}rner \cite{korner1973coding}. Since then several equivalent definitions appeared \cite{simonyi1995graph}. We define the \emph{interval entropy} $H(\RR)$, for a set of intervals \RR, as the entropy of the intersection graph of $\RR$. While investigating the question of sorting an arbitrary poset, Cardinal \etal \cite{cardinal2013sorting} found an interesting geometrical interpretation of the poset entropy, which applies to our interval entropy: let a poset $P$ describe a set of (open) intervals $\RR$ combinatorially, that is, for each $R_i$ we know which intervals intersect $R_i$, are contained in $R_i$, contain $R_i$, and are disjoint from $R_i$. Denote by $E(\RR)$ the infinite set of sets of intervals on the domain $(0,1)$ (that is, each $\mathcal{I} \in E(\RR)$ is a set of intervals, where each interval $I_i \in \mathcal{I}$ has endpoints in $(0,1)$) which induce the same poset as $\RR$. Then Cardinal \etal prove the following lemma (see Figure~\ref{fig:entropy} for an illustration):
\begin{lemma}[\cite{cardinal2013sorting}, Lemma 3.2 paraphrased]
\label{lemma:cardinal}
\[
    H(\RR) =  \log n -  \min_{\mathcal{I} \in E(\RR)} \left\{ \frac{1}{n}  \sum_{I_i \in \mathcal{I}}  -\log |I_i|   \right \}\,.
\]
\end{lemma}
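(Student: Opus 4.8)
The plan is to go through the polytope definition of graph entropy together with the perfectness of interval graphs. Recall that for a graph $H$ on vertex set $[n]$ and the uniform distribution, K\"orner's entropy \cite{korner1973coding} equals $H(H)=\min_{\mathbf b\in\mathrm{VP}(H)}\frac1n\sum_i\log\frac1{b_i}$, where $\mathrm{VP}(H)$ is the vertex packing (stable set) polytope, i.e.\ the convex hull of the indicator vectors of the independent sets of $H$. Let $G$ be the intersection graph of $\RR$, so that $H(\RR)=H(G)$, and let $\bar G$ be its complement. Since $G$ is an interval graph, $\bar G$ is the comparability graph of the interval order underlying $\RR$, and $G$ (hence $\bar G$) is perfect; the entropy‑splitting identity for perfect graphs (see e.g.\ \cite{simonyi1995graph}) then gives $H(G)+H(\bar G)=\log n$. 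Therefore it suffices to prove
\[
H(\bar G)\;=\;\min_{\mathcal I\in E(\RR)}\frac1n\sum_{I_i\in\mathcal I}-\log|I_i|\,.
\]
The crucial observation, used in both directions, is that the independent sets of $\bar G$ are exactly the cliques of $G$, which by Helly's theorem in one dimension are exactly the subfamilies of intervals sharing a common point.

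For the inequality ``$\le$'', fix any $\mathcal I\in E(\RR)$. The at most $2n$ endpoints of the intervals partition $(0,1)$, up to a finite set, into open pieces $J_1,\dots,J_m$ of total length $1$; on $J_k$ the set $S_k:=\{i:J_k\subseteq I_i\}$ is constant, and since all these intervals contain $J_k$ it is a clique of $G$, i.e.\ an independent set of $\bar G$. As $|I_i|=\sum_{k:i\in S_k}|J_k|$, the length vector $(|I_1|,\dots,|I_n|)$ equals $\sum_k|J_k|\,\mathbf 1_{S_k}$ and so lies in $\mathrm{VP}(\bar G)$; plugging it into the definition of $H(\bar G)$ yields $H(\bar G)\le\frac1n\sum_i-\log|I_i|$, and minimising over $\mathcal I$ gives the bound.

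For the reverse inequality, let $\mathbf a^\ast\in\mathrm{VP}(\bar G)$ attain $H(\bar G)$; it has all coordinates positive, since otherwise the objective is infinite while $H(\bar G)\le\log n<\infty$. Because $\sum_i\log(1/a_i)$ is decreasing in each coordinate and every clique of $G$ sits inside a maximal one, we may assume $a^\ast_i=\sum_{l:i\in N_l}\mu_l$, where $N_1,\dots,N_p$ are the maximal cliques of $G$ listed along a clique path of the interval graph, $\mu_l\ge 0$, and $\sum_l\mu_l=1$. Now place pairwise disjoint consecutive open subintervals $J_1,\dots,J_p$ of $(0,1)$ from left to right with $|J_l|=\mu_l$, and let $I_i$ be the interior of $\bigcup_{l:i\in N_l}\overline{J_l}$. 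By the consecutive‑ones property of the clique path, $\{l:i\in N_l\}$ is a contiguous block, so $I_i$ is a single open interval of length $a^\ast_i>0$; moreover $I_i\cap I_j\neq\emptyset$ iff the two blocks overlap iff $i,j$ lie in a common clique iff $ij\in E(G)$, and the left‑to‑right order of the $I_i$ reproduces the interval order of $\RR$ because the clique path of an interval graph encodes exactly that order. After a generic perturbation making all endpoints distinct (and shrinking slightly to stay inside $(0,1)$) we obtain $\mathcal I\in E(\RR)$ with $\frac1n\sum_i-\log|I_i|=\frac1n\sum_i-\log a^\ast_i=H(\bar G)$, so the minimum over $E(\RR)$ is at most $H(\bar G)$. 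Combining the two inequalities proves the displayed identity, and hence the lemma.

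The main obstacle is the ``$\ge$'' direction: converting the optimal fractional point $\mathbf a^\ast$ of the stable set polytope into an honest interval representation lying in the prescribed combinatorial class $E(\RR)$. This requires (i) reducing to a convex combination of \emph{maximal}‑clique indicators rather than arbitrary ones, and (ii) invoking the structure theory of interval graphs (clique paths, the consecutive‑ones property) to guarantee both that each $I_i$ comes out as a single interval and that the induced left–right order, not merely the intersection graph, matches that of $\RR$; the case of a disconnected $G$, where the clique path is unique only per component and up to reversal, is a minor but genuine bookkeeping point. By contrast, the ``$\le$'' direction and the appeal to the perfect‑graph entropy‑splitting identity are essentially immediate.
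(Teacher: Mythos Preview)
The paper does not prove this lemma; it is quoted from Cardinal \etal\ \cite{cardinal2013sorting} and used as a black box, so there is no ``paper's own proof'' to compare against here.

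Your argument is the standard one via the stable--set--polytope formulation of graph entropy together with the Csisz\'ar--K\"orner--Lov\'asz--Marton--Simonyi splitting identity for perfect graphs, and it is essentially correct. Two points deserve a sentence each. First, the paper's informal description of $E(\RR)$ asks that the embedding reproduce not only intersection/disjointness but also containment among the intervals. Your clique--path construction automatically gets the intersection graph and the left--right (interval) order right, but when two overlapping intervals of $\RR$ lie in exactly the same set of maximal cliques, your $I_i$ and $I_j$ coincide; a \emph{naive} generic perturbation could then make one contain the other even though neither of $R_i,R_j$ contains the other. The remedy is simply to perform the perturbation guided by the endpoint order of the original $\RR$ (extend each $I_i$ slightly according to where the endpoints of $R_i$ sit relative to its clique neighbours), which you should state. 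Second, after any perturbation the lengths change, so what you literally obtain is $\inf_{\mathcal I\in E(\RR)}\frac1n\sum_i(-\log|I_i|)\le H(\bar G)+\varepsilon$ for every $\varepsilon>0$; together with the other direction this gives equality of the \emph{infimum} with $H(\bar G)$, and whether it is attained depends on whether $E(\RR)$ permits coincident endpoints---a cosmetic issue that does not affect how the lemma is used downstream.
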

We show that the $\pi$-ambiguity for any containment-compatible $\pi$ is a 3-approximation of $n \cdot H(\RR)$. To achieve this we rewrite the lemma from Cardinal \etal in the following way,
\[
    H(\RR) =  \log n -  \min_{\mathcal{I} \in E(\RR)} \left\{ \frac{1}{n}  \sum_{I_i \in \mathcal{I}} \left(\log{n} -\log(n|I_i|) \right) \right\}= \max_{\mathcal{I} \in E(\RR)} \left\{ \frac{1}{n}  \sum_{I_i \in \mathcal{I}} \log(n|I_i|) \right \}\,.
\]
%
%
%Cardinal \etal investigated the entropy through a maximal embedding of $\RR$ on $(0,1)$ (see Appendix~\ref{appx:entropy} for historical context for this decision). Because the intervals have a size between $(0,1)$, the entropy is $\log n$ minus the negative log of their sizes. We chose to instead look for a maximal embedding on $(0,n)$ and to interpret $|R_i| \in E$ as the \emph{fraction} of the domain that the interval $R_i$ occupies. This interpretation allows for a reformulation which is more distant from the original notion of entropy, but algebraically more intuitive.
%
% We can view the interpretation of Cardinal \etal of the interval entropy as if the embedding range is scaled up from $(0,1)$ to $(0,n)$.
% To this extent, we will introduce a weight $w(R_i)$ for each interval $R_i$ that will denote the \emph{fraction} of the domain that the interval $R_i$ occupies. That is, if $R_i$ is embedded in the range $(0,n)$, then $|R_i|=n\,w(R_i)$. Then the formula for the entropy becomes
An embedding $\mathcal{I}$ gives each interval $I_i$ a size between $0$ and $1$. To simplify the algebra later, we re-interpret this size as the \emph{fraction} (weight) of the domain $(0,1)$ that $I_i$ occupies. We associate with each $\mathcal{I} \in E(\RR)$ a set of weights $W$ such that for all $i$, $w_i = |I_i|$; we write $W \sim E(\RR)$. 
From now on we consider embeddings on the domain $(0,n)$: an interval then has a size $n|I_i| = n w_i$.
The formula for the entropy becomes:
\begin{equation}
\label{eq:weights}
       H(\RR) = \frac{1}{n} \max_{W \sim E(\RR)} \left\{    \log \left( \prod_{w_i \in W}n\,w_i \right) \right \}\,. 
\end{equation}

\begin{figure}[t]
\centering
\includegraphics{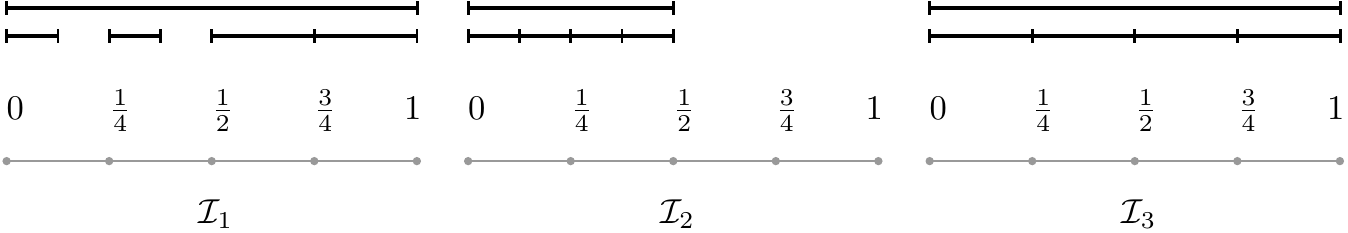}
\captionsetup{width=\linewidth}
\caption{Let $\RR$ be a set of five intervals, where four intervals are mutually disjoint and contained in one larger interval. We show three embeddings $\mathcal{I}_1, \mathcal{I}_2, \mathcal{I}_3 \in E(\RR)$ of these intervals on the domain $(0, 1)$ with the same combinatorial properties. Embedding $\mathcal{I}_1$ shows that $H(\RR) \ge \log 5 - \frac{1}{5} \log(1 \cdot  \frac{1}{8}\cdot \frac{1}{8}\cdot \frac{1}{4}\cdot \frac{1}{4} )$. $\mathcal{I}_2$ shows that $H(\RR) \ge \log 5 - \frac{1}{5} \log( \frac{1}{2}\cdot \frac{1}{8}\cdot \frac{1}{8}\cdot \frac{1}{8}\cdot \frac{1}{8})$ and $\mathcal{I}_3$ is the optimal embedding which shows that $H(\RR) = \log 5 -\frac{1}{5} \log( 1\cdot  \frac{1}{4}\cdot \frac{1}{4}\cdot \frac{1}{4}\cdot \frac{1}{4}) $.}
\label{fig:entropy}
\end{figure}

\subparagraph*{Ambiguity and entropy.} 
Next, we show that the interval entropy gives an upper bound on the ambiguity. The entropy of $\RR$ is the maximum over all embeddings on $(0,n)$, so any embedding of \RR on the domain $(0,n)$ gives a lower bound on $H(\RR)$. We will create an embedding with a corresponding weight assignment $W$ such that:
\begin{equation}
\label{eq:entropy}
    \SO^\pi(\RR) = \log \left( \prod_{R_i \in \RR} |\CS_i| \right) \le \log \left( \prod_{w_i \in W} (n\,w_i)^2 \right) \le 2n H(\RR)\,.
\end{equation}
We start with the original input embedding of $\RR$ and we sort the coordinates of all the endpoints (both left- and right-). 
To each endpoint $p$ we assign a new coordinate $\frac{k}{2}$ if $p$ is the $k$th endpoint in the sorted order (indexing from 0).
Thus, we obtain an embedding of \RR on $(0,n-\frac{1}{2})$. 
For any containment-compatible permutation $\pi$, the length of each interval $R_i$ in this embedding is at least $\frac{1}{2}|\CS_i|$, as each interval $R_i$ contains at least $|\CS_i| - 1$ endpoints of the intervals from its contact set in its interior. 
Also note that the distance between every right endpoint and the consecutive endpoint to the right is $\frac{1}{2}$. 
Thus, we can increase the coordinate of every right endpoint by $\frac{1}{2}$ and obtain an embedding of $\RR$ on $(0,n)$ with a corresponding weight assignment $W$, such that the length of each interval $R_i$ is at least $\frac{1}{2}(|\CS_i|+1)$.
This allows us to prove the following lemma:

\begin{lemma}
\label{lemma:lowerbound}
For any containment-compatible permutation $\pi$ of a set of intervals \RR, 

$\SO^\pi(\RR) \le  2n H(\RR)$.
\end{lemma}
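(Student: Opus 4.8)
The plan is to chain together the two inequalities displayed in~\eqref{eq:entropy}, using the embedding on $(0,n)$ that was just constructed immediately before the lemma statement. The construction already gives us a weight assignment $W \sim E(\RR)$ in which the $i$-th interval has length $n w_i \ge \tfrac{1}{2}(|\CS_i| + 1) \ge \tfrac{1}{2}|\CS_i|$. Since $W$ comes from a genuine embedding realizing the same poset as $\RR$, it is admissible in the maximization of~\eqref{eq:weights}, so $\log\!\left(\prod_i n w_i\right) \le n H(\RR)$ by definition of interval entropy.

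First I would establish the left inequality of~\eqref{eq:entropy}: because $n w_i \ge \tfrac{1}{2}|\CS_i|$ fails to beat $|\CS_i|$ directly, I instead use the stronger bound $n w_i \ge \tfrac{1}{2}(|\CS_i| + 1)$, which gives $(n w_i)^2 \ge \tfrac{1}{4}(|\CS_i|+1)^2 \ge |\CS_i|$ — the last step holding since $(k+1)^2 \ge 4k$ for all $k \ge 1$ (equivalently $(k-1)^2 \ge 0$), and $|\CS_i| \ge 1$ always. Taking the product over all $i$ and then $\log$ of both sides yields $\SO^\pi(\RR) = \log\!\left(\prod_i |\CS_i|\right) \le \log\!\left(\prod_i (n w_i)^2\right) = 2\log\!\left(\prod_i n w_i\right)$. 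Combining with the entropy bound from the previous paragraph, $2\log\!\left(\prod_i n w_i\right) \le 2n H(\RR)$, which is exactly the claim.

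The one point that needs care — and which I expect to be the only real obstacle — is justifying the interior-endpoint count that underlies $n w_i \ge \tfrac{1}{2}(|\CS_i|+1)$, namely that in the rescaled embedding every interval $R_i$ contains at least $|\CS_i| - 1$ foreign endpoints in its interior. This is where containment-compatibility of $\pi$ is used: if $R_j \in \CS_i$ with $j \ne i$, then $R_j \cap R_i \ne \emptyset$ and $R_j \not\subset R_i$ (since $R_j \subset R_i$ would force $i < j$, contradicting $R_j \in \RR^\pi_{\le i}$), so $R_j$ must have at least one endpoint strictly inside $R_i$; moreover distinct such $R_j$ contribute distinct endpoints. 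One then has to check that the two rescaling steps (mapping the $k$-th sorted endpoint to $k/2$, then shifting right endpoints by $+\tfrac12$) preserve the combinatorial type — i.e. do not destroy or create intersections or containments — and that they turn "$\ge |\CS_i|-1$ interior endpoints" into "length $\ge \tfrac12(|\CS_i|+1)$". Both checks are routine once the endpoint-counting claim is pinned down, so the bulk of the write-up is the inequality chain above, with the geometric bookkeeping for the embedding supplied in the paragraph preceding the lemma.
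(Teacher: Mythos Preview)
Your argument is correct and in fact cleaner than the paper's. Both use the same embedding yielding $n w_i \ge \tfrac12(|\CS_i|+1)$, but where the paper splits into four cases $A,B,C,D$ according to whether $|\CS_i|\in\{1,2,3\}$ or $|\CS_i|\ge 4$ and combines two separate estimates multiplicatively, you dispatch everything at once via the single elementary inequality $(k+1)^2 \ge 4k$, i.e.\ $(n w_i)^2 \ge |\CS_i|$. This avoids the case analysis entirely and makes the chain in~\eqref{eq:entropy} transparent; the paper's version gains nothing extra from the split.

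One small slip in your endpoint-counting paragraph: the direction you rule out is reversed. Containment-compatibility says $R_a \subset R_b \Rightarrow a < b$, so $R_j \subset R_i$ with $j<i$ is \emph{allowed} (and then $R_j$ even contributes two endpoints to the interior of $R_i$). What is forbidden is $R_i \subset R_j$, since that would force $i<j$, contradicting $j\le i$. With that correction your case distinction is: either $R_j \subset R_i$ (two interior endpoints), or $R_j$ and $R_i$ overlap properly (one endpoint of $R_j$ inside $R_i$); either way at least one, giving the required $|\CS_i|-1$ total. The rest of your write-up goes through as stated.
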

\begin{proof}
Consider the embedding and corresponding weight assignment $W$ constructed above. Consider any containment-compatible permutation $\pi$. We split the intervals of \RR into four sets depending on the size of their contact set: let $A := \{ R_i \mid |\CS_i| = 1\}$, $B := \{ R_i \mid |\CS_i| = 2\}$, $C := \{ R_i \mid |\CS_i| = 3\}$ and $D := \RR \backslash \{A,B,C\}$. Let these sets contain $a, b, c$ and $d$ intervals respectively. 
Then, using Equation~(\ref{eq:weights}) for the entropy,
\begin{equation}
\label{eq:base}
2^{nH(\RR)} \ge 
\prod_{R_i\in A}\frac{|\CS_i|+1}{2} \prod_{R_i\in B}\frac{|\CS_i|+1}{2} \prod_{R_i\in C}\frac{|\CS_i|+1}{2} \prod_{R_i\in D}\frac{|\CS_i|+1}{2}
\ge 
\left(\frac{2}{2}\right) ^a \left(\frac{3}{2}\right)^b   \left(\frac{4}{2}\right)^c \left(\frac{4}{2}\right)^d.
\end{equation}
On the other hand,
\[
2^{nH(\RR)} \ge
\prod_{R_i\in\RR}\frac{|\CS_i|+1}{2} \ge \prod_{R_i\in A}|\CS_i| \prod_{R_i\in B}\frac{3}{4}|\CS_i| \prod_{R_i\in C}\frac{2}{3}|\CS_i| \prod_{R_i\in D}\frac{1}{2}|\CS_i|
= \left(\frac{3}{4}\right)^b   \left(\frac{2}{3}\right)^c \left(\frac{1}{2}\right)^d 2^{\SO^\pi(\RR)},
\]
as
\begin{table}
	$\frac{|\CS_i|+1}{2} = \begin{cases}
	1=|\CS_i|\,, & \text{ if } R_i\in A\,,\\
	\frac{3}{2}=\frac{3}{4}|\CS_i|\,, & \text{ if } R_i\in B\,,\\
	2=\frac{2}{3}|\CS_i|\,, & \text{ if } R_i\in C\,,
	\end{cases}\\
	\frac{|\CS_i|+1}{2} \ge  \frac{1}{2}|\CS_i|\,, \quad\text{ if } R_i\in D\,.$
\end{table}
%\[
%2^{nH(\RR)} \ge \prod_{R_i \in \RR}\! \frac{|\CS_i|+1}{2} 
%\ge     \prod_{R_i \in A}\! |\CS_i| 
%        \prod_{R_i \in B}\! \left( \frac{3}{4} \right) |\CS_i| 
%        \prod_{R_i \in C}\! \left( \frac{2}{3} \right) |\CS_i|
%        \prod_{R_i \in D}\! \left( \frac{1}{2} \right) |\CS_i|
%    = \left( \frac{3}{4} \right)^b \left( \frac{2}{3}\right)^c \left( \frac{1}{2}\right)^d 2^{\SO^\pi(\RR)}\,.
%\]
Then, using Equation~(\ref{eq:base}) we get 
\[
2^{nH(\RR)}\cdot 2^{nH(\RR)} \ge \left(\frac{3}{2}\right)^b   \left(\frac{4}{2}\right)^c \left(\frac{4}{2}\right)^d \cdot
\left( \frac{3}{4} \right)^b \left( \frac{2}{3}\right)^c \left( \frac{1}{2}\right)^d 2^{\SO^\pi(\RR)} \ge 2^{\SO^\pi(\RR)}\,,
\]
and therefore
\[
2nH(\RR) \ge \SO^\pi(\RR)\,. \qedhere
\]
\end{proof}
We continue by showing that the ambiguity also gives an upper-bound for the interval entropy. Starting with a helper lemma:

\begin{lemma}
\label{lemma:helper}
Suppose $\RR$ is partitioned into two sets $X$ and $Y$ such that for each $R \in X, R' \in Y$, $R$ and $R'$ are disjoint. In any weight assignment $W$ that realizes $H(\RR)$, the intervals in $X$ together have length $|X|$ and the intervals in $Y$ together have length $|Y|$ on the domain $(0, n)$.
\end{lemma}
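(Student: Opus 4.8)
The plan is to turn ``$W$ realizes $H(\RR)$'' into a one-variable optimization. By Equation~(\ref{eq:weights}), working on the domain $(0,n)$ and writing $v_i := n\,w_i$ for the length of the embedded copy $I_i$ of $R_i$, a weight assignment realizes $H(\RR)$ precisely when the corresponding embedding maximizes $\prod_i v_i$ over all embeddings in $E(\RR)$. The structural fact I would exploit is that every interval of $X$ is disjoint from every interval of $Y$ in \emph{every} embedding in $E(\RR)$; hence in any such embedding the open sets $U_X := \bigcup_{R_i \in X} I_i$ and $U_Y := \bigcup_{R_i \in Y} I_i$ are disjoint subsets of $(0,n)$, so $L_X := |U_X|$ and $L_Y := |U_Y|$ satisfy $L_X + L_Y \le n$. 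Intuitively the two sides interact only through the amounts of room $L_X$ and $L_Y$ they occupy, and the claim is that the optimum allots room $|X|$ to $X$ and $|Y|$ to $Y$.

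For a set $S$ of intervals carrying its combinatorial type inherited from $\RR$ and a length $\ell > 0$, let $M(S,\ell)$ be the supremum of $\prod_{R_i \in S} v_i$ over all embeddings of $S$ into an interval of length $\ell$; rescaling an embedding shows $M(S,\ell) = \ell^{|S|}\,M(S,1)$. The key step is a packing inequality: in any embedding of $\RR$, $\prod_{R_i \in X} v_i \le M(X, L_X)$, and symmetrically for $Y$. To prove it, observe that $U_X$ is a disjoint union of at most $|X|$ maximal open intervals, each $I_i$ with $R_i\in X$ lies in exactly one of them, and two $X$-intervals lying in different maximal pieces are disjoint; translating these maximal pieces so they abut left to right produces an embedding of $X$ into a single interval of length $L_X$ with the same combinatorial type (rigid motions preserve all relations within a piece, and abutting preserves disjointness across pieces, with open intervals that share only an endpoint counting as disjoint) and with every $v_i$ unchanged.

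Combining the two packing inequalities with $M(S,\ell)=\ell^{|S|}M(S,1)$, every embedding of $\RR$ satisfies
\[
\prod_{R_i\in\RR} v_i = \Bigl(\prod_{R_i\in X} v_i\Bigr)\Bigl(\prod_{R_i\in Y} v_i\Bigr) \le L_X^{|X|}\,L_Y^{|Y|}\,M(X,1)\,M(Y,1), \qquad L_X + L_Y \le n.
\]
The function $(L_X,L_Y)\mapsto L_X^{|X|}L_Y^{|Y|}$ increases in each variable, so over $L_X+L_Y\le n$ its maximum lies on $L_X+L_Y = n$, where strict concavity of $\log$ and $|X|+|Y|=n$ pin the unique maximizer at $L_X=|X|$, $L_Y=|Y|$ with value $|X|^{|X|}|Y|^{|Y|}$. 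This value is attained (or approached): concatenate a near-optimal embedding of $X$ scaled to length $|X|$ with one of $Y$ scaled to length $|Y|$. Hence the maximum of $\prod_i v_i$ equals $|X|^{|X|}|Y|^{|Y|}M(X,1)M(Y,1)$, and if $W$ realizes $H(\RR)$ the displayed inequality must be tight; since $L_X+L_Y\le n=|X|+|Y|$, tightness forces $L_X = |X|$ and $L_Y = |Y|$, which is exactly the assertion.

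The step I expect to be the main obstacle is the packing inequality, specifically checking that re-assembling the maximal pieces of $U_X$ genuinely yields an embedding of the same combinatorial type on $X$ — so that $M(X,L_X)$ legitimately bounds $\prod_{R_i\in X} v_i$ — while correctly handling the convention under which open intervals meeting only at a point are disjoint; the rest reduces to a routine one-variable optimization. Note also that the whole argument only uses that the maximum defining $H(\RR)$ is attained, which is exactly the hypothesis that $W$ realizes it.
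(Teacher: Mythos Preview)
Your argument is correct and follows the same route as the paper: both introduce a scaled optimum (your $M(S,\ell)$ is exactly $2^{nH(S,\ell)}$ for the paper's ``relative entropy'' $H(\RR,\lambda)$), use the homogeneity $M(S,\ell)=\ell^{|S|}M(S,1)$ (the paper's Equation~(\ref{eq:rewrite})), decompose the optimum over $E(\RR)$ into a one-parameter maximization over how much total room goes to $X$ versus $Y$, and locate the unique maximizer at $(|X|,|Y|)$. One small imprecision: your attainability step ``concatenate a near-optimal embedding of $X$ with one of $Y$'' need not land in $E(\RR)$, because the connected components of the intersection graph of $\RR$---each lying wholly in $X$ or wholly in $Y$ by hypothesis---may interleave along the line; the easy fix is to lay down the scaled near-optimal piece for each component in that linear order, and the paper's one-line justification of the product decomposition glosses over the very same detail.
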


\begin{proof}
In Equation~(\ref{eq:weights}) we rewrote the formula for entropy in terms of weights: for any weight assignment $W \sim E(\RR)$, $w_i$ is the proportion that $R_i$ occupies on the domain, and we embedded $\RR$ on the domain $(0,n)$. 
%We now expand this notion to rescaling to an arbitrary domain $(0, \lambda)$ for any scalar $\lambda$. 
We can similarly embed $\RR$ on the domain $(0, \lambda)$ for an arbitrary scalar $\lambda$. 
We define the \emph{relative entropy} of $\RR$ (refer to Figure~\ref{fig:relativeentropy} (top)) as:
\[
    H(\RR, \lambda) := \frac{1}{n} \max_{W \sim E(\RR)} \left \{  \log \left(  \prod_{w_i \in W}  \lambda\, w_i \right)\right \} \,.
\]
Observe that $H(\RR, n) = H(\RR)$ and that:
\begin{equation}
\label{eq:rewrite}
    \forall \lambda, \mu, \quad \mu\, w_i = \left( \frac{\mu}{\lambda}  \right) \lambda\, w_i \Rightarrow 
    2^{nH(X, \mu)} =   \left( \frac{\mu}{\lambda} \right)^{|X|} 2^{nH(X, \lambda)}\,.
\end{equation}
If the intervals in $X$ can occupy a width of at most $\lambda$, then it is always optimal to give the intervals in $Y$ a total width of $n-\lambda$ (since the entropy maximizes the product of the lengths of intervals in $X$ and $Y$). This implies:
\[
    2^{nH(X \cup Y)} = \max_{\lambda \in [0,n]} \{ 2^{nH(X, \lambda)} \cdot  2^{nH(Y, n - \lambda)} \} \,.
\]
See Figure~\ref{fig:relativeentropy} (bottom) for an illustration of the argument. If we now substitute Equation~(\ref{eq:rewrite}) into this equation we get that the maximum is realized if $\lambda = |X|$ which proves the lemma.
\end{proof}
\begin{figure}[t]
\centering
\includegraphics{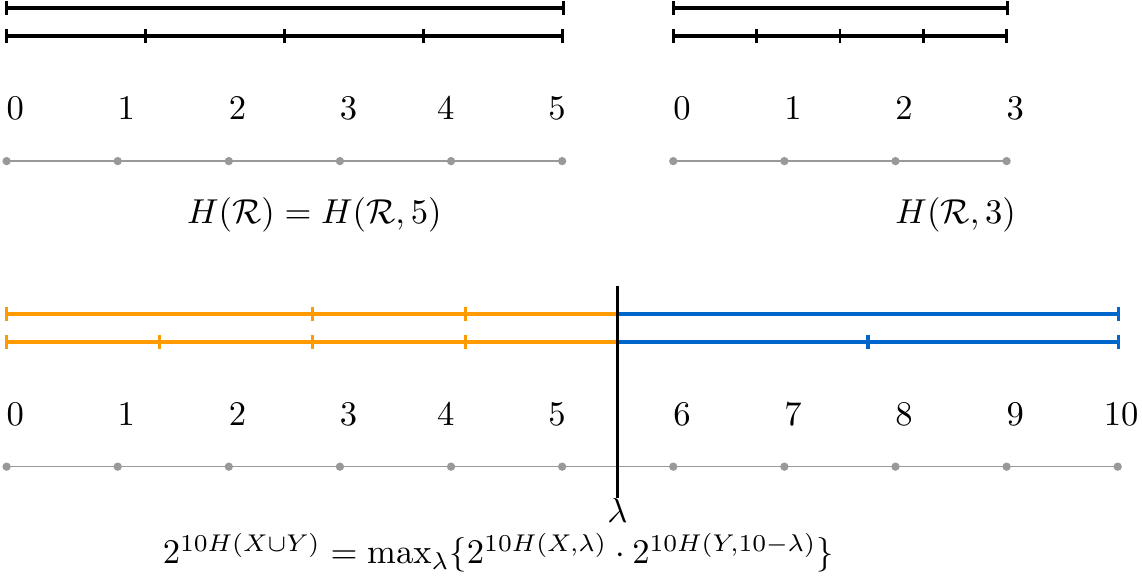}
\captionsetup{width=\linewidth}
\caption{(top left) A set $\RR$ of five intervals and their optimal embedding for the entropy relative to $\lambda = 5$. (top right) The optimal embedding of $\RR$ for the entropy relative to $\lambda = 3$. Observe that the proportion that each interval obtains of the domain is the same in both embeddings. (bottom) An illustration of the argument for Lemma~\ref{lemma:helper}: we see a set $X$ of 7 intervals and a set $Y$ of 3 intervals with the intervals in $X$ disjoint from the intervals in $Y$. If we vary $\lambda$, we vary the total width on which $X$ and $Y$ are embedded. The entropy is given by the maximal embedding and therefore found by optimizing $\lambda$.}
\label{fig:relativeentropy}
\end{figure}
\begin{lemma}
\label{lemma:upperbound}
Let $\pi$ be any containment-compatible permutation, then $nH(\RR) \le 3 \SO^{\pi}(\RR)$.
\end{lemma}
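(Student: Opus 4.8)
The plan is to rewrite the claim in multiplicative form, reduce to connected $\RR$ via Lemma~\ref{lemma:helper}, and then induct on $n$ by deleting the last interval of $\pi$. By Equation~(\ref{eq:weights}), $2^{nH(\RR)}=\max_{W\sim E(\RR)}\prod_i n w_i$ is exactly the largest value of $\prod_i|I_i|$ attainable by an embedding of $\RR$ on $(0,n)$. Writing $\Phi(S,\lambda)$ for the analogous maximum when an interval set $S$ is embedded on $(0,\lambda)$ — so $\Phi(S,|S|)=2^{|S|H(S)}$ and, by rescaling the domain, $\Phi(S,\lambda)=(\lambda/|S|)^{|S|}\Phi(S,|S|)$ — the statement $nH(\RR)\le 3\SO^{\pi}(\RR)$ is equivalent to $\Phi(\RR,n)\le\prod_{R_i\in\RR}|\CS_i|^{3}$, and I would prove this form.

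First, reduce to connected $\RR$. If the intersection graph of $\RR$ is disconnected, its components are pairwise interval-disjoint, so iterating Lemma~\ref{lemma:helper} shows that in every optimal embedding a component $\RR'$ occupies width exactly $|\RR'|$ and $\Phi(\RR,n)=\prod_{\RR'}\Phi(\RR',|\RR'|)$. Since each contact set $\CS_i$ lies inside the component of $R_i$, both $\SO^{\pi}$ and $\prod_i|\CS_i|$ split over components, so it suffices to prove the bound componentwise. Henceforth assume $\RR$ connected and induct on $n$; the base case $n=1$ is immediate ($H=\SO^{\pi}=0$).

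For $n\ge 2$, let $R_n$ be the last element of $\pi$; then $R_n$ is containment-maximal, every region meeting it precedes it (so $|\CS_n|$ equals $1$ plus the number of regions meeting $R_n$, and $|\CS_n|\ge 2$ since $\RR$ is connected), and $R_n$ belongs to no other contact set. Fix an optimal embedding $\mathcal I^{*}$ of $\RR$ on $(0,n)$; as $\RR$ is connected its union is all of $(0,n)$. Let $C_1,\dots,C_s$ be the connected components of $\RR\setminus\{R_n\}$, with $|C_l|=m_l$, $\sum_l m_l=n-1$; in $\mathcal I^{*}$ the component $C_l$ spans some width $\lambda_l$, and disjointness of the $C_l$ forces $\sum_l\lambda_l\le n$. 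Restricting $\mathcal I^{*}$ to each $C_l$ and using the rescaling identity,
\[
\Phi(\RR,n)\;=\;|I_n^{*}|\prod_{l=1}^{s}\ \prod_{R_i\in C_l}|I_i^{*}|\;\le\;|I_n^{*}|\prod_{l=1}^{s}\Phi(C_l,\lambda_l)\;=\;|I_n^{*}|\cdot\Big(\prod_{l}(\lambda_l/m_l)^{m_l}\Big)\cdot\Big(\prod_{l}2^{m_lH(C_l)}\Big).
\]
Weighted AM--GM with $\sum_l\lambda_l\le n$ and $\sum_l m_l=n-1$ bounds the first product by $(\tfrac{n}{n-1})^{n-1}<3$; the induction hypothesis, applied to each $C_l$ with the restricted (still containment-compatible) permutation, bounds $2^{m_lH(C_l)}$ by $\prod_{R_i\in C_l}|\CS_i^{C_l}|^{3}$; and because $R_n$ is last, $\CS_i^{C_l}=\CS_i$ for every $R_i\in C_l$, so the product of these over all $l$ equals $\prod_{i<n}|\CS_i|^{3}$. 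Hence $\Phi(\RR,n)<3\,|I_n^{*}|\prod_{i<n}|\CS_i|^{3}$, and since $|\CS_n|\ge 2$ it remains only to show $|I_n^{*}|\le|\CS_n|$: this gives $3|I_n^{*}|\le 3|\CS_n|\le|\CS_n|^{3}$ and hence $\Phi(\RR,n)\le\prod_{i\le n}|\CS_i|^{3}$.

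For the remaining length bound I would show that in the optimal embedding $\mathcal I^{*}$ of a connected interval set, every region $R_i$ satisfies $|I_i^{*}|\le 1+(\text{number of regions meeting }R_i)$; for $R_n$ this is exactly $|I_n^{*}|\le|\CS_n|$. Parametrising an embedding (for a fixed order of the endpoints) by the widths of the elementary intervals between consecutive endpoints, $\Phi$ is the maximum of a continuous function on a simplex, so it is attained, and the first-order optimality condition reads $\sum_{i:\,x\in I_i^{*}}1/|I_i^{*}|=1$ for almost every $x\in(0,n)$ — the multiplier equals $1$ because integrating this relation over $(0,n)$ returns $n$ on both sides. Integrating the identity instead over $x\in I_n^{*}$ gives $|I_n^{*}|=\sum_i|I_i^{*}\cap I_n^{*}|/|I_i^{*}|$, where each term is at most $1$ and is nonzero only for regions meeting $R_n$, of which there are $|\CS_n|$ (counting $R_n$); thus $|I_n^{*}|\le|\CS_n|$, closing the induction. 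The one non-routine ingredient — the main obstacle — is making this optimality condition precise: one must argue that the entropy maximum is attained over (possibly degenerate) embeddings, that the optimal embedding covers all of $(0,n)$, and that stationarity takes exactly this ``unit local density'' form. Everything else — the connectivity reduction, the width bookkeeping via Lemma~\ref{lemma:helper} and AM--GM, and the contact-set bookkeeping around $R_n$ — is routine.
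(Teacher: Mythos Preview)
Your argument is correct and follows the same inductive skeleton as the paper's proof: peel off the last interval, bound the growth of the remaining product by a factor $\bigl(\tfrac{n}{n-1}\bigr)^{n-1}<e<3$, and bound the length of the removed interval in the optimal embedding by its contact-set size. The substantive difference lies in how that last length bound is obtained. The paper appeals directly to Lemma~\ref{lemma:helper}, counting the intervals disjoint from $R_{i+1}$ and reading off $(i+1)w_{i+1}\le|\CS_{i+1}|$; it also handles $|\CS_{i+1}|=1$ as a separate case rather than reducing to connected components up front. You instead derive the bound from the stationarity identity $\sum_{j:\,x\in I_j^*}1/|I_j^*|=1$ and integrate it over $I_n^*$.

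Your route costs the technical overhead you flag (attainment over the closure of $E(\RR)$, coverage of $(0,n)$, KKT at possibly degenerate optima), but it makes the length bound self-contained: Lemma~\ref{lemma:helper} needs the two sides of the partition to be mutually interval-disjoint, which the split into ``meets $R_{i+1}$'' versus ``disjoint from $R_{i+1}$'' need not satisfy, so your variational argument is arguably the more rigorous of the two at this step. Your explicit component decomposition together with weighted AM--GM is likewise a clean substitute for the paper's uniform rescaling by $\tfrac{i}{i+1}$, though both produce the same $(n/(n-1))^{n-1}$ factor in the end.
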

\begin{proof}
We defined $\RR_{\le i}$ as the prefix of $\RR$. We prove the lemma with induction on $i$. 
%
% \[
% \text{Induction Hypothesis: }  2^{nH(\RR_{\le i}, i)} \le \left( 2^{ \SO^\pi(\RR_{\le i})} \right)^3. 
% \]
% %
\[
\text{Induction Hypothesis: }  
\forall{j \le i} j H(\RR_{\le j}, j)\le 3 \SO^\pi(\RR_{\le j})\,.
\]
For $i=1$ both the lefthand and the righthand side are $0$. So we assume that the lemma holds for all $j \le i$ and we prove it for $j = i+1$. $H(\RR_{\le i+1}, i+1)$ is the relative entropy of $\RR_{\le(i+1)}$ on the domain $(0, i+1)$. We know that $3\SO^\pi(\RR_{\le (i+1}) = 3\SO^\pi(\RR_{\le i}) + 3 \log |\Gamma_{i+1}|$. We make a distinction between two cases: $|\Gamma_{i+1}| = 1$ or otherwise. 
If $|\Gamma_{i+1}| = 1$ then $R_{i+1}$ is disjoint from $R_{\le i}$. Lemma~\ref{lemma:helper} guarantees, that if we want to embed $R_{\le i} \cup \{R_{i+1} \}$ on $(0, i+1)$ that $R_{i+1}$ gets a size of $1$. The remaining intervals get embedded with a total width of $i$ which they already had in the previous iteration. So:
\[(i+1)H(\RR_{\le(i+1)}, i+1 ) = i H(\RR_{i}, i) + \log 1 \le 3\SO^\pi(\RR_{\le i}) + 3 \log |\Gamma_{i+1}| = 3 \SO^\pi(\RR_{\le(i+1)})\,.
\]
In the second case $|\Gamma_{i+1}|$ is at least $2$. The other intervals used to be optimally embedded on $(0, i)$ and are now embedded on $(0,i+1)$. So each of them expands with at most a factor $\frac{i+1}{i}$ or algebraically:
\[
    (i+1)H(\RR_{\le(i+1)}, i+1 ) \le  
    iH(\RR_{\le i}, i)
    + \log \left( \left( \frac{i+1}{i} \right)^i \right)
    + \log((i+1) w_{i+1}) 
    \le  
    iH(\RR_{\le i}, i)
    + \log e
    + \log((i+1) w_{i+1}) 
    \,.
\]
There are $i - |\Gamma_{i+1}|$ intervals disjoint from $R_{i+1}$ so Lemma~\ref{lemma:helper} guarantees that $(i+1) w_{i+1} \ge |\Gamma_{i+1}| \ge 2$. It follows that:
\[
nH(\RR_{\le(i+1)}, i+1 ) \le nH(\RR_{\le i}, i) + 3 \log |\Gamma_{i+1}|
\]
which implies the Lemma.
\end{proof}
Lemmas~\ref{lemma:lowerbound} and~\ref{lemma:upperbound} imply the following theorem.
\begin{theorem}
For any set of intervals $\RR$ in one dimension, for any containment-compatible permutation $\pi$ on $\RR$, $\SO^\pi(\RR)$ is a $3$-approximation of $nH(\RR)$.
\end{theorem}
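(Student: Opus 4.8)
The plan is to obtain the theorem as an immediate corollary of Lemmas~\ref{lemma:lowerbound} and~\ref{lemma:upperbound}, which already carry all the content. First I would fix an arbitrary containment-compatible permutation $\pi$ of $\RR$ and recall the two one-sided bounds: Lemma~\ref{lemma:lowerbound} gives $\SO^\pi(\RR) \le 2nH(\RR)$, and Lemma~\ref{lemma:upperbound} gives $nH(\RR) \le 3\SO^\pi(\RR)$. Before chaining them I would make sure both statements refer to the same quantity: the proof of Lemma~\ref{lemma:upperbound} runs an induction phrased in terms of the relative entropy $H(\RR_{\le i}, i)$, so I would instantiate its hypothesis at $i=n$ and use $H(\RR_{\le n}, n) = H(\RR)$ to read off the plain form $nH(\RR) \le 3\SO^\pi(\RR)$.

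Combining the two inequalities sandwiches $\SO^\pi(\RR)$ as $\frac{1}{3} nH(\RR) \le \SO^\pi(\RR) \le 2nH(\RR) \le 3nH(\RR)$, so $\SO^\pi(\RR)$ lies within a multiplicative factor $3$ of $nH(\RR)$ on both sides, which is precisely the assertion that $\SO^\pi(\RR)$ is a $3$-approximation of $nH(\RR)$. Since $\pi$ was an arbitrary containment-compatible permutation, and since (by the third property announced in Section~\ref{sec:lowerbound}) the permutation realizing $\SO(\RR)$ is itself containment-compatible, the same sandwich holds with $\SO^\pi(\RR)$ replaced by $\SO(\RR)$, recovering the constant-factor approximation of the interval entropy referenced in the introduction.

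At the level of the theorem itself there is essentially no obstacle remaining: the two directional bounds are the substance, and the harder of them is the upper bound on the entropy (Lemma~\ref{lemma:upperbound}), whose delicate point is the relative-entropy bookkeeping — tracking how the intervals already placed on $(0,i)$ dilate when re-embedded on $(0,i+1)$, bounding the accumulated factor $\bigl(1+\tfrac{1}{i}\bigr)^i \le e$, and invoking Lemma~\ref{lemma:helper} to guarantee $(i+1) w_{i+1} \ge |\CS_{i+1}|$. The one thing I would be careful about when writing the corollary cleanly is the direction of the inequalities and the size of the constant: one bound loses a factor $2$ and the other a factor $3$, and it is the factor $3$ that governs, so the result is stated as a $3$-approximation rather than anything tighter.
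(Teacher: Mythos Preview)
Your proposal is correct and matches the paper's own treatment: the theorem is stated immediately after Lemmas~\ref{lemma:lowerbound} and~\ref{lemma:upperbound} with the single line ``Lemmas~\ref{lemma:lowerbound} and~\ref{lemma:upperbound} imply the following theorem,'' i.e., exactly the sandwich $\tfrac{1}{3}nH(\RR)\le \SO^\pi(\RR)\le 2nH(\RR)$ you spell out. Your extra remark about instantiating the induction at $i=n$ and passing to $\SO(\RR)$ via containment-compatibility of the optimizing permutation is also how the paper obtains the subsequent corollary.
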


\begin{corollary}
For any set of intervals $\RR$ in one dimension, the ambiguity $\SO(\RR)$ is a $3$-approximation of $n H(\RR)$.
\end{corollary}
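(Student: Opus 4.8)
The plan is to derive the corollary from the preceding theorem together with one additional fact that is not yet in place: that the minimum $\pi$-ambiguity over \emph{all} permutations is already attained by a containment-compatible one; call this claim $(\star)$. Granting $(\star)$, let $\pi^*$ be a containment-compatible minimizer, so $\SO(\RR)=\SO^{\pi^*}(\RR)$. Lemma~\ref{lemma:lowerbound} applied to $\pi^*$ gives $\SO(\RR)\le 2nH(\RR)$, and Lemma~\ref{lemma:upperbound} applied to $\pi^*$ gives $nH(\RR)\le 3\SO(\RR)$; hence $\tfrac13 nH(\RR)\le \SO(\RR)\le 3nH(\RR)$, i.e., $\SO(\RR)$ is a $3$-approximation of $nH(\RR)$. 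For the upper direction one does not even need $(\star)$: the relation $\subsetneq$ is a strict partial order on $\RR$, so a topological sort that places smaller regions first produces \emph{some} containment-compatible permutation $\pi_0$, and $\SO(\RR)\le\SO^{\pi_0}(\RR)\le 2nH(\RR)$ by Lemma~\ref{lemma:lowerbound}. So the entire content of the corollary beyond the theorem is the statement $(\star)$.

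To prove $(\star)$ I would run a local exchange argument. The key observation is that if $R_a$ directly precedes $R_b$ in $\pi$ and $R_b\subsetneq R_a$, then swapping $R_a$ and $R_b$ does not increase $\SO^\pi(\RR)$. Only the contact sets of $R_a$ and $R_b$ change: since $R_a$ and $R_b$ intersect, the swap turns $\CS_a$ into $\CS_a\cup\{R_b\}$ and $\CS_b$ into $\CS_b\setminus\{R_a\}$, so $|\CS_a|$ becomes $|\CS_a|+1$ and $|\CS_b|$ becomes $|\CS_b|-1$. Because every region meeting $R_b$ also meets $R_a$, the set $\CS_b\setminus\{R_b\}$ lies in $\CS_a$, whence $|\CS_b|\le|\CS_a|+1$ and therefore $(|\CS_a|+1)(|\CS_b|-1)=|\CS_a||\CS_b|+|\CS_b|-|\CS_a|-1\le|\CS_a||\CS_b|$. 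Starting from an arbitrary minimizer of $\SO^\pi$ and repeatedly applying such swaps, one aims to drive the number of \emph{containment inversions} (pairs in which a smaller region follows a larger one containing it) down to zero while keeping $\SO^\pi$ at its minimum value.

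An equivalent and possibly cleaner route uses the recursion $\SO(\RR)=\min_{R\in\RR}\bigl\{\SO(\RR\setminus\{R\})+\log(\deg(R)+1)\bigr\}$, where $\deg(R)$ is the number of regions intersecting $R$: every permutation ends with some region $R$, whose contact set is then exactly its neighbours together with itself, and deleting a last region leaves all other contact sets unchanged. If the minimum in this recursion were always attained at some inclusion-\emph{maximal} region $R$, then repeatedly peeling off such a region and reading the peeling order in reverse would produce a containment-compatible optimal permutation, establishing $(\star)$ by induction on $|\RR|$.

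The main obstacle, in either formulation, is the same. A containment inversion need not be witnessed by two regions that are \emph{adjacent} in $\pi$: incomparable intervals can be interleaved between nested ones, and a swap across an incomparable, overlapping interval can, in isolation, increase $\SO^\pi$. So one needs a careful case analysis on the relative order of the four interval endpoints involved, showing that there is always \emph{some} adjacent swap (or single-interval relocation) that simultaneously decreases the number of containment inversions and does not increase $\SO^\pi$ — equivalently, in the recursive formulation, that an inclusion-maximal region is always an optimal choice of ``last region.'' I expect this endpoint case analysis to be where the real work lies; the reduction in the first paragraph and the local swap inequality above are routine.
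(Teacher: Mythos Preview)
Your approach coincides with the paper's: both reduce to $(\star)$ and argue it via a swap. The paper's entire proof of the corollary is two sentences asserting, with no adjacency restriction and no further justification, that ``swapping a region $R$ with a region $R'$ that contains $R$ \ldots always improves the $\pi$-ambiguity''; your adjacent-swap inequality therefore already gives strictly more detail than the paper, and the non-adjacent case you flag as ``where the real work lies'' is simply left unaddressed there as well.
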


\begin{proof}
The permutation which realizes the ambiguity of $\RR$ must always be containment-compatible. This is because swapping a region $R$ with a region $R'$ that contains $R$ in the permutation $\pi$ always improves the $\pi$-ambiguity.
\end{proof}
Let $\T(\RR)$ be the number of linear extensions of the poset induced by $\RR$. In the proof of Lemma~3.2~\cite{cardinal2013sorting} Cardinal \etal show that $\log \T(\RR) \le nH(\RR) \le 2 \log \T(\RR)$. This implies that the interval graph entropy is a lower-bound for constructing any unique linear order underlying a poset. 
Proximity structures depend on sorting~\cite{de2008computational}. Thus, we conclude: %the following theorem:
\begin{theorem}
Reconstructing a %$d$-dimensional 
proximity structure on $\RR$ is lower-bounded by $\Omega(\SO(\RR))$.
\end{theorem}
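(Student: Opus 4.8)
The plan is to chain together the results already established in the excerpt. We want to show that reconstructing any proximity structure on $\RR$ requires $\Omega(\SO(\RR))$ time. The backbone of the argument is the information-theoretic lower bound for sorting, combined with the fact that $\SO(\RR)$ is a constant-factor approximation of $nH(\RR)$, which in turn is tied to $\log \T(\RR)$.

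First I would recall the comparison-tree (decision-tree) lower bound. Any algorithm that correctly determines a total order consistent with the poset induced by $\RR$ must, in the worst case, distinguish among all $\T(\RR)$ linear extensions; a decision tree of depth $t$ has at most $2^t$ leaves, so $t = \Omega(\log \T(\RR))$ comparisons are required (this is Fredman's bound, cited in the introduction). Crucially, this lower bound is \emph{independent of any preprocessing}: preprocessing has access only to $\RR$, not to $X$, so it cannot resolve which of the $\T(\RR)$ extensions is realized by the true point set $X$; that information can only be extracted in the reconstruction phase, and each unit of work (comparison / branching) yields at most one bit. Hence reconstructing the sorted order of $X$ costs $\Omega(\log \T(\RR))$ in the reconstruction phase.

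Next I would invoke the sandwich inequalities already quoted: $\log \T(\RR) \le nH(\RR) \le 2\log\T(\RR)$ (Cardinal \etal, proof of Lemma~3.2), and the Corollary that $\SO(\RR)$ is a $3$-approximation of $nH(\RR)$, i.e.\ $\frac{1}{3} nH(\RR) \le \SO(\RR) \le nH(\RR)$. Composing these gives $\SO(\RR) \le nH(\RR) \le 2\log\T(\RR)$, so $\log \T(\RR) \ge \tfrac12 \SO(\RR)$, and therefore sorting $X$ in the reconstruction phase takes $\Omega(\SO(\RR))$ time. Finally, since every standard proximity structure (Delaunay triangulation, EMST, nearest-neighbor graph, Voronoi diagram, etc.) implicitly encodes the sorted order in $d = 1$ — and more generally allows recovery of enough order information to sort — computing such a structure on $X$ is at least as hard as sorting, so the $\Omega(\SO(\RR))$ bound carries over; in higher dimensions one embeds a one-dimensional hard instance (collinear disks) to lift the bound verbatim.

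**The main obstacle** is making precise the claim ``proximity structures depend on sorting'' so that the reduction is genuinely valid rather than folklore: one must exhibit, for each targeted proximity structure, a linear-time (or at least $o(\SO(\RR))$-time) procedure that extracts a consistent total order of $X$ from the structure, so that a faster reconstruction of the structure would yield a faster sort and contradict the decision-tree bound. For one-dimensional inputs this is immediate — a Voronoi diagram or EMST on points on a line literally lists them in order — and for $d \ge 1$ it suffices to restrict attention to instances where all disks are centered on a common line and are thin enough that the true points stay collinear, reducing to the one-dimensional case. I would therefore phrase the theorem's proof around this collinear reduction plus the inequality chain, and remark that the bound holds against \emph{any} preprocessing since $\T(\RR)$ possible inputs remain indistinguishable after the preprocessing phase.
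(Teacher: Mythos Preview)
Your proposal is correct and follows essentially the same route as the paper: chain Fredman's decision-tree lower bound $\Omega(\log\T(\RR))$ with Cardinal \etal's sandwich $\log\T(\RR)\le nH(\RR)\le 2\log\T(\RR)$ and the ambiguity--entropy approximation from the preceding Corollary, then invoke the fact that proximity structures subsume sorting. The paper's argument is terser---it simply cites \cite{de2008computational} for ``proximity structures depend on sorting'' and states the conclusion---whereas you spell out the preprocessing-independence and the collinear-embedding reduction for $d\ge 2$; this extra detail is helpful but not a different approach. One minor slip: Lemma~\ref{lemma:lowerbound} gives $\SO(\RR)\le 2nH(\RR)$, not $\SO(\RR)\le nH(\RR)$, so your chain should read $\SO(\RR)\le 2nH(\RR)\le 4\log\T(\RR)$; the constant is irrelevant for the $\Omega$ bound.
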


\section{Sorting}
\label{sec:sorting}

Let $\RR = \{R_1, R_2, \ldots, R_n\}$ be a set of intervals and let $X = \{x_1, x_2, \ldots, x_n\}$ be a set of points (values) with $x_i \in R_i$. We show how to construct an auxiliary structure $\Xi$ on \RR in the preprocessing phase without using $X$, such that, in the reconstruction phase, we can construct a linear-size binary search tree $T$ on $X$ in $\Theta(\SO(\RR))$ time. To achieve this, we first construct a specific containment-compatible permutation $\pi$ of $\RR$, and then show how to maintain $\Xi$ when we process the intervals in this order.
% which we will call a \emph{level permutation}.

\subsection{Level permutation}
\label{sec:levelpermutation}

\begin{figure}[t]
\centering
\includegraphics[]{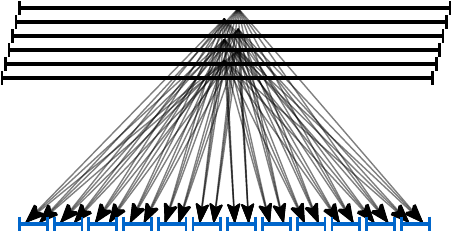}
\captionsetup{width=\linewidth}
\caption{A set of intervals with a containment graph with quadratic complexity.}
\label{fig:containmentgraph}
\end{figure}

We need a processing permutation $\pi$ of $\RR$ with the following conditions:
\begin{enumerate}[label=(\roman*)]
  \item $\pi$ is containment-compatible, 
  \item intervals containing no interval of $\RR$ come first and are ordered from right to left and
  \item we can construct $\pi$ in $\Oh(n \log n)$ time.
\end {enumerate}
In Section~\ref{sec:lowerbound} we showed that if condition (i) holds, the $\pi$-ambiguity is a lower-bound for sorting $X$. In Section~\ref{sec:algorithm} we show that condition (ii) is useful to reconstruct an AVL-tree on $X$ in $\Oh( \SO^{\pi}(\RR))$ time. Condition (iii) bounds the time used in the preprocessing phase.

Below, we define two natural partitions of $\RR$ based on the containment graph of $\RR$: the \emph {height partition} and the \emph {depth partition}. However, a permutation compatible with the height partition satisfies conditions (i) and (ii) but not (iii), and a permutation compatible with the depth partition satisfies conditions (i) and (iii) but not (ii).
Therefore, we define a hybrid partition, which we call the \emph{level partition}, which implies a permutation which does satisfy all three conditions, below.

\subparagraph*{Containment graph.}
For a set of intervals $\RR$, its containment graph $G(R)$ represents the containment relations on $\RR$. $G(R)$ is a directed acyclic graph where $R_i$ contains $R_j$ if and only if there is a directed path from $R_i$ to $R_j$ and all intervals $R \in \RR$ that are contained in no other interval of $\RR$ share a common root. The bottom intervals are a subset of the leaves of this graph.
Note that $G(\RR)$ can have quadratic complexity (Figure~\ref{fig:containmentgraph}).

\subparagraph*{Height and Depth partition.}
%We partition $\RR$ into \emph{levels} and then construct a partition where all the intervals in level $i$ precede the intervals in level $i+1$. Consider for any set of intervals $\RR$ the containment graph $G(\RR)$. 
We define the \emph{height partition} as the partition of $\RR$ into $m$ levels $\mathcal{H} = H_1 \ldots H_m, H_i \subseteq \RR$ where all $R \in H_j$ have \emph{height} (minimal distance from $R$ to a leaf) $j+1$ in $G(\RR)$ or equivalently: the intervals in $H_{j+1}$ contain no intervals in $\RR \backslash H_{\le j}$ (Figure~\ref{fig:levels}). 
We analogously define the \emph{depth partition} as the partition of $\RR$ into $m$ levels $\mathcal{D} = D_1 \ldots D_m, D_i \subseteq \RR$ where all $R \in D_j$ have \emph{depth} (maximal distance from the root to $R$) $(m-j)$ in $G(\RR)$. 
Clearly any permutation compatible with $\mathcal{H}$ or $\mathcal{D}$ satisfies condition (i). 
All leaves of $G(\RR)$ have height $1$ so per definition are all in $H_1$ and thus any permutation compatible with $\mathcal{H}$ that sorts $H_1$ satisfies condition (ii). 
Clearly the same is not true for $\mathcal{D}$. On the other hand, in Lemma \ref{lemma:depth} we show how to construct $\mathcal{D}$ in $\Oh(n \log n)$ time. 
It is unknown whether the height partition can be created in $\Oh(n \log n)$ time (see Appendix~\ref{appx:dominance}). 
\begin{figure}[t]
\centering
\includegraphics[width=\linewidth]{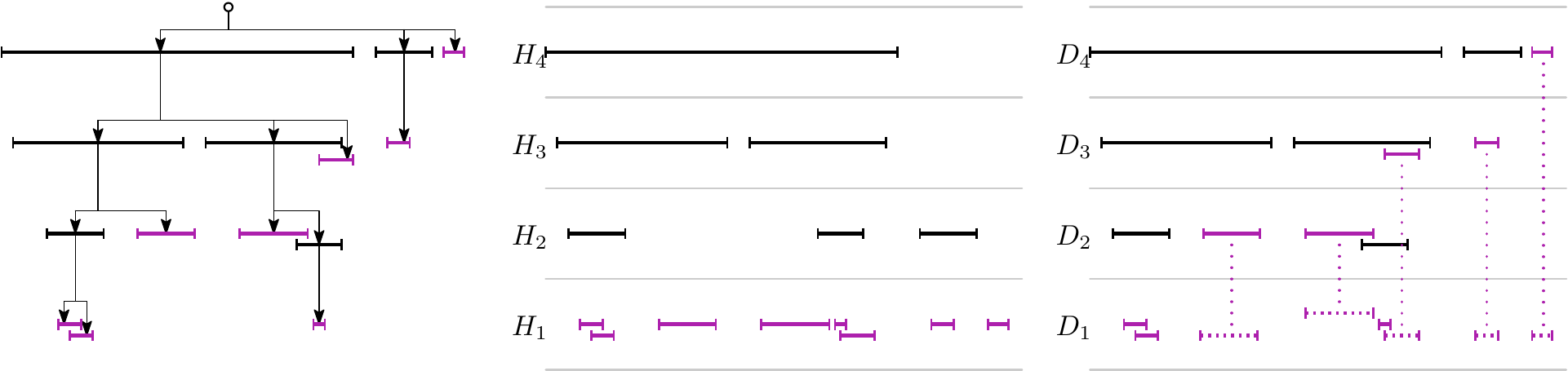}
\caption{(left) A set of intervals $\RR$ and the corresponding containment graph $G(\RR)$, leaves of $G(\RR)$ are purple. (middle) The height partition. (right) The depth partition.}
\label{fig:levels}
\end{figure}
\begin{lemma}
\label{lemma:depth}
For any set of intervals $\RR$ we can construct $\mathcal{D}$ in $\Oh(n \log n)$ time.
\end{lemma}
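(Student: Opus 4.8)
The plan is to compute the depth partition $\mathcal{D}$ without ever materializing the (potentially quadratic) containment graph $G(\RR)$. The key observation is that the depth of an interval $R$ in $G(\RR)$ — the maximal number of strict containments $R \subsetneq R' \subsetneq R'' \subsetneq \cdots$ extending upward from $R$ to a maximal interval — depends only on the nesting structure of left and right endpoints. Concretely, I would first sort all $2n$ endpoints of the intervals in $\RR$ in $\Oh(n \log n)$ time, breaking ties so that, among coincident endpoints, the nesting order of the corresponding intervals is respected (a left endpoint of a containing interval comes before the left endpoint of a contained one, and symmetrically on the right). I would also remove duplicate intervals (those with identical endpoints) up front, or handle them by a consistent tie-break, since the lemma concerns the partition of $\RR$ as a set with containments among distinct intervals.

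Next, I would sweep the sorted endpoint sequence from left to right maintaining a stack of currently ``open'' intervals (opened at a left endpoint, still awaiting their right endpoint). The stack at any moment is exactly a maximal chain of containments, so when an interval $R$ opens, the number of intervals currently on the stack gives a lower bound on the length of a containment chain above $R$. But depth is the \emph{maximum} chain length going up, and one sweep only sees the intervals that contain $R$ and opened before $R$; an interval can also be contained in one that opens after it (if they share a left endpoint) — this is handled by the tie-breaking rule — but more importantly, the depth of $R$ as defined (distance from the root) is $m - j$, so I actually need, for each $R$, the longest containment chain from a maximal interval down to $R$. This longest chain is realized by greedily taking, at each level, a tightest enclosing interval; equivalently it equals the maximum over all $R' \supsetneq R$ of $\mathrm{depth}(R') $ plus one. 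A single left-to-right sweep with a stack computes, for each interval at the moment it closes (its right endpoint is processed), the value $d(R) = 1 + \max\{d(R') : R' \text{ on the stack when } R \text{ was pushed}\}$; maintaining the running maximum along the stack (e.g., storing at each stack entry the best $d$-value among it and everything below) makes each push/pop $\Oh(1)$ amortized, so the whole sweep is $\Oh(n)$ after the initial sort. Finally, assign $R$ to level $D_{m - d(R)}$ where $m = \max_R d(R) + 1$ (or whatever indexing convention the definition dictates); bucketing the $n$ intervals by their $d$-value is another $\Oh(n)$.

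The main obstacle I anticipate is the correctness argument that the stack-based sweep really computes the depth in $G(\RR)$ and not just some chain length — in particular, arguing that the longest upward containment chain from $R$ is always witnessed by intervals that are simultaneously open when $R$ is open (after the tie-break), i.e. that nesting of intervals corresponds exactly to nesting of their endpoint brackets in the sorted-with-tie-break sequence. This is intuitively clear (interval containment $=$ bracket nesting), but it needs the tie-breaking among equal coordinates to be set up carefully and consistently for both left and right endpoints, and one must check that an interval contained in another but sharing \emph{both} endpoints is excluded beforehand (duplicates) or never causes a cycle. A secondary, minor obstacle is making the ``running maximum along the stack'' bookkeeping precise so that the amortized $\Oh(1)$ per operation claim is airtight. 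Everything else — sorting, bucketing — is standard and contributes the stated $\Oh(n\log n)$ bound, dominated by the initial sort.
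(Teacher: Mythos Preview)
Your sweep has a genuine gap: the set of currently open intervals is \emph{not} a chain of containments in general, because intervals can overlap without either containing the other. Take $A=[0,10]$, $B=[1,9]$, $C=[2,3]$, $R=[2.5,5]$. When $R$ opens, all of $A,B,C$ are open; your rule gives $d(R)=1+\max\{d(A),d(B),d(C)\}=1+d(C)$, but $C$ does not contain $R$, so the correct depth of $R$ is $1+d(B)$, strictly smaller. The slogan ``interval containment $=$ bracket nesting'' holds only for laminar families; here the endpoint sequence $L_C,L_R,R_C,R_R$ is not a well-nested bracket word, so the right endpoint you meet need not belong to the top of your ``stack'', and the running-maximum trick breaks down both logically and as a data structure.

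What you actually need, at the moment $R$ opens, is the maximum depth among open intervals whose right endpoint exceeds the right endpoint of $R$ (those are precisely the open intervals that contain $R$). The paper achieves exactly this without a stack: it processes intervals by increasing left endpoint and maintains, for each level $D_j$ already created, the maximum right endpoint $r_j$ seen in that level, with the invariant that $r_j$ is monotone in $j$. For the current interval $R$ with right endpoint $r$, a binary search over the $r_j$'s finds the minimal $j$ with $r_j>r$; an interval in $D_j$ then witnesses containment of $R$, while no interval in any lower level can, so $R$ goes into $D_{j-1}$. This is $\Oh(\log n)$ per interval after an $\Oh(n\log n)$ sort. Your plan can be repaired along the same lines, but not with a plain stack and a running maximum.
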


\begin{proof}
We iteratively insert intervals from left to right; refer to Appendix \ref{appx:depth}.
\end{proof}

\subparagraph*{Level partition.}

We now define the \emph{level partition}: a hybrid between $\mathcal{H}$ and $\mathcal{D}$: $\mathcal{L} =  L_1 \ldots L_m$, where all $R \in L_j$ have depth  $(m-j)$ in $G(\RR)$ except for the leaves of $G(\RR)$, which are in $L_1$ regardless of their depth. We can compute the level partition from $\mathcal{D}$ in $\Oh(n \log n)$ time by identifying all leaves of $G(\RR)$ with a range query. The \emph{level permutation} is the permutation where intervals in $L_i$ precede intervals in $L_j$ and where within each level the intervals are ordered from right to left. It can be constructed from $\mathcal{L}$ in $\Oh(n \log n)$ time by sorting.

\smallskip\noindent
Theorem~\ref{thm:permutation} follows directly from the preceding discussion.

\begin{theorem}\label{thm:permutation}
The level permutation satisfies conditions (i), (ii) and (iii).
\end{theorem}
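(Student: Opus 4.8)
The plan is to check the three conditions in turn, deriving (i) and (ii) directly from the definition of the level partition together with elementary properties of the containment graph $G(\RR)$, and obtaining (iii) by composing running times that are already available.

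For condition (i), I would take any pair with $R_i \subset R_j$ and $R_i \ne R_j$ and show that $R_i$ precedes $R_j$ in the level permutation. Strict containment gives a directed path of length at least one from $R_j$ to $R_i$ in $G(\RR)$, so $R_j$ has strictly smaller depth than $R_i$ and $R_j$ is not a leaf. A non-leaf has a child whose depth is one larger, and the maximum depth in $G(\RR)$ is $m-1$, so every non-leaf has depth at most $m-2$; in particular $R_j \notin L_1$ and $R_j$ sits in level $L_{k_j}$ with $k_j = m - \mathrm{depth}(R_j) \ge 2$. If $R_i$ is a leaf then $R_i \in L_1$, which precedes $L_{k_j}$; if $R_i$ is a non-leaf then $R_i \in L_{k_i}$ with $k_i = m - \mathrm{depth}(R_i) < m - \mathrm{depth}(R_j) = k_j$. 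Either way $R_i$ precedes $R_j$, and the right-to-left tie-break inside a level never enters the picture, because a containment pair always lands in two distinct levels.

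For condition (ii), the only point that needs care is the claim that $L_1$ is exactly the set of leaves of $G(\RR)$, i.e.\ exactly the intervals of $\RR$ that contain no interval of $\RR$. By construction every leaf is placed into $L_1$, and the only other intervals that $L_1$ could pick up are those of depth $m-1$; but a non-leaf has depth at most $m-2$ by the observation above, so every depth-$(m-1)$ interval is already a leaf and nothing new is added. Hence the intervals containing no interval of $\RR$ are exactly the elements of $L_1$, they come first in the level permutation, and within $L_1$ they are ordered from right to left, which is precisely condition (ii).

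Condition (iii) is bookkeeping: Lemma~\ref{lemma:depth} produces the depth partition $\mathcal{D}$ in $\Oh(n \log n)$ time, a batch of range queries identifies the leaves and turns $\mathcal{D}$ into the level partition $\mathcal{L}$ in $\Oh(n \log n)$ time, and sorting each level from right to left yields the level permutation in $\Oh(n \log n)$ time overall. I do not expect a real obstacle here; the only step that demands a moment's thought is pinning down $L_1$ as the leaf set for condition (ii), after which everything follows from the definitions.
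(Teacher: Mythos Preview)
Your proposal is correct and matches the paper's approach: the paper gives no explicit proof beyond the sentence ``Theorem~\ref{thm:permutation} follows directly from the preceding discussion,'' and your argument simply unpacks that discussion, verifying (i) via depth monotonicity under containment, (ii) via the identification of $L_1$ with the leaves of $G(\RR)$, and (iii) via Lemma~\ref{lemma:depth} plus the range-query and sorting steps the paper mentions just before the theorem. One microscopic imprecision: a child's depth is \emph{at least} one larger (not necessarily exactly one larger), but your conclusion that every non-leaf has depth at most $m-2$ only needs the inequality, so the argument stands.
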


\subsection{Algorithm}
\label{sec:algorithm}

We continue to describe a preprocessing and reconstruction algorithm to preprocess a set of intervals $\RR$ in $\Oh(n \log n)$ time such that we can sort $X$ in $\Theta(\SO(\RR))$ time.

\subparagraph*{Anchors.}

Let $\pi$ be the level permutation of \RR. In the preprocessing phase we build an AVL-tree $T$ on the bottom intervals. In the reconstruction phase, we insert each remaining $x_i \in X$ into $T$ in the order $\pi$ in $\Oh(\SO^\pi(\RR))$ time. This implies that for bottom intervals we are not allowed to spend even constant time and for each non-bottom interval $R_i$, we want to locate $x_i$ in $T$ in $\Oh(\log |\CS_i| )$ time. To achieve this, we supply every non-bottom interval $R$ with an \emph{anchor} denoted by $\RA(R_i)$. For a non-bottom interval $R \not \in L_1$, we define its anchor as an arbitrary interval contained in $R$. All intervals in $L_1$ are ordered from right to left, so for any non-bottom interval $R \in L_1$, its right endpoint is contained in the interval preceding it and we make this interval the anchor of $R$ (refer to Figure~\ref{fig:preprocessing}).

\subparagraph*{Preprocessing phase.}
The auxiliary structure $\Xi$ is an AVL-tree $T$ on the bottom intervals, augmented with a set of pointers leading from intervals to their anchors. We will implement $T$ as a leaf-based AVL-tree, i.e., where values are stored in the leaves, and inner nodes are decision nodes. Finally, we will use a doubly linked list to connect the leaves of the tree.

\begin{figure}[t]
\centering
\includegraphics[page=4, width =\linewidth]{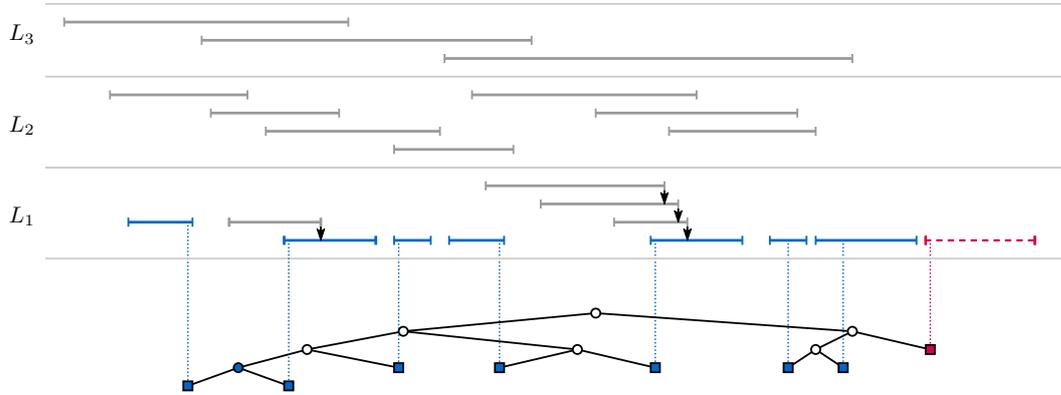}
\caption{The auxiliary structure $\Xi$. In the level $L_1$ all non-bottom intervals are shown their anchor. (top) A schematic representation of intervals in the level permutation $\pi$ (from bottom to top). (bottom) The Fibonacci tree $T$ containing the subset $X^b$ corresponding to the bottom intervals. Note that we added one dummy node in red. }
\label{fig:preprocessing}
\end{figure}

Let  $X^b \subset X$ be the points corresponding to bottom intervals. Bottom intervals are mutually disjoint and we can build an AVL-tree $T$ on $X^b$ without knowing their true values. Recall that a Fibonacci tree is a tree binary where for every inner node, its left subtree has a depth 1 greater than its right subtree. A Fibonacci tree is a valid AVL-tree and we construct the AVL-tree over $X^b$ as a Fibonacci tree where we add at most $|X^b|$ dummy leaves with value $\infty$ to ensure that the total number of nodes is a Fibonacci number. Refer to Figure~\ref{fig:preprocessing} for an example. We remove the bottom intervals from \RR and for each non-bottom interval $R$ we identify its anchor $\RA(R)$ and we supply $R$ with a pointer to $\RA(R)$. As the final step of the preprocessing phase we connect the leaves of $T$ in a doubly linked list.
To summarize: $\Xi$ consists of a graph of intervals connected by anchor pointers and an AVL-tree $T$. Each bottom interval is in $T$ and each non-bottom interval has a directed path to a node in $T$.

\begin{lemma}\label{lem:preprocessing}
We can construct the auxiliary structure $\Xi$ in $\Oh(n \log n)$ time.
\end{lemma}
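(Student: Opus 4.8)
The plan is to build the three ingredients of $\Xi$ — the Fibonacci-shaped AVL-tree $T$ on the bottom intervals, the anchor pointers, and the doubly linked list through the leaves of $T$ — essentially on top of the level-permutation construction of Section~\ref{sec:levelpermutation}, so that everything reduces to sorting, linear scans, and a handful of standard range queries.

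First I would invoke Theorem~\ref{thm:permutation} (which in turn uses Lemma~\ref{lemma:depth}) to compute, in $\Oh(n \log n)$ time, the level partition $\mathcal{L}$ and the level permutation $\pi$. Recall that $L_1$ is exactly the set of leaves of $G(\RR)$: an interval at the maximum depth contains no other interval, so the two defining clauses of $\mathcal{L}$ coincide at the bottom level. Leaves of $G(\RR)$ are pairwise non-nested, so the left-to-right order on $L_1$ is also the order by right endpoint. Every non-leaf interval strictly contains some interval, which precedes it in the containment-compatible $\pi$ and intersects it, so every bottom interval lies in $L_1$. To single out the bottom intervals I sweep $L_1$ from right to left while maintaining the minimum left endpoint $m$ seen so far: an interval $[\ell,r]$ is a bottom interval iff $m > r$ (after which I update $m$ to $\min(m,\ell)$), because the already-seen intervals are precisely those preceding $[\ell,r]$ in $\pi$ and, being non-nested and to the right of $[\ell,r]$, one of them meets $[\ell,r]$ iff one of them has left endpoint at most $r$. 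This sweep costs $\Oh(n)$.

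Next I build $T$. The bottom intervals are pairwise disjoint, so the sorted order of $X^b$ coincides with the already known sorted order of the bottom intervals; hence the key sequence and the shape of $T$ are determined without access to $X$. I pad $X^b$ with at most $|X^b|$ dummy leaves of value $\infty$ up to the next Fibonacci number of leaves and construct the corresponding Fibonacci tree top-down in time linear in its size; since consecutive Fibonacci numbers are within a constant factor, both the amount of padding and $|T|$ are $\Oh(n)$. A final left-to-right pass threads the leaves of $T$ into a doubly linked list in $\Oh(n)$ time. Note that $\Xi$ then has size $\Oh(n)$: one tree on at most $2n$ nodes, plus one anchor pointer per interval.

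It remains to attach an anchor to every non-bottom interval within $\Oh(n \log n)$ total time, and this is the only step where a naive approach — materialising $G(\RR)$, which by Figure~\ref{fig:containmentgraph} may have $\Theta(n^2)$ edges — would be too slow. For a non-bottom $R \in L_1$, the interval of $L_1$ immediately to its right in the right-endpoint order must intersect $R$ (otherwise, by non-nestedness, every interval further to the right would be disjoint from $R$ as well and $R$ would be a bottom interval), and therefore it contains the right endpoint of $R$; this neighbour is read off from the already sorted $L_1$ in $\Oh(n)$ time and serves as $\RA(R)$. For a non-bottom $R \notin L_1$, $R$ is a non-leaf of $G(\RR)$, so some interval is contained in it, and producing one such witness for all of these $R$ at once is a batched dominance problem — for $R = [\ell,r]$, report any stored interval $[\ell',r']$ with $r' < r$ and $\ell' > \ell$ — which a single sweep over right endpoints, backed by a balanced search tree keyed on left endpoints, solves in $\Oh(n \log n)$ time. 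Whether $G(\RR)$ is small or quadratic, this avoids ever touching more than $\Oh(n)$ of its edges. Summing the bounds, the preprocessing runs in $\Oh(n \log n)$ time, which proves the lemma; the correctness of the reconstruction algorithm that relies on these anchors is deferred to Section~\ref{sec:algorithm}.
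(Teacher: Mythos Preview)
Your proposal is correct and follows essentially the same approach as the paper: build the level permutation via Theorem~\ref{thm:permutation}, scan $L_1$ right-to-left to separate bottom from non-bottom intervals and to assign $L_1$-anchors, use a range/dominance query to find a contained interval for each $R\notin L_1$, and construct the Fibonacci tree with leaf links. The paper's proof is terser but hits exactly these four points; your extra justifications (why $L_1$ equals the leaves of $G(\RR)$, why the right neighbour in $L_1$ must intersect a non-bottom $R$, and how the sweep-plus-BST realises the range query) merely spell out what the paper leaves implicit.
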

\begin{proof}
The level partition and permutation can be constructed in $\Oh(n \log n)$ time and with it we get access to the intervals in $L_1$ sorted from right to left. We scan $L_1$ from right to left and for each interval $R \in L_1$ we either identify it as a bottom interval or to supply it with its anchor. We identify for each $R \not \in L_1$ its anchor in logarithmic time using a range query. We construct the Fibonacci tree on $X^b$ with leaf pointers in $\Oh(n \log n)$ time \cite{nievergelt1973binary}. 
\end{proof}

\subparagraph*{Reconstruction phase.}

During the reconstruction phase, we need to maintain the balance of $T$ when we insert new values. $T$ contains bottom intervals which we are not allowed to charge even constant time, so the classical amortized-constant analysis \cite{mehlhorn1986amortized} of AVL-trees does not immediately apply. Nonetheless we show in Appendix \ref{appx:avl}:
\begin{lemma}
\label{lemma:amortized}
Let $T$ be an AVL-tree where each inner node has two subtrees with a depth difference of 1. We can dynamically maintain the balance of $T$ in amortized $\Oh(1)$ time.
\end{lemma}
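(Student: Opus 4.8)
The plan is to prove Lemma~\ref{lemma:amortized} by a potential-function argument, adapting the classical amortized analysis of AVL-tree rebalancing to the setting where the ``starting'' tree (the Fibonacci tree on the bottom intervals, possibly padded with $\infty$-leaves) is maximally unbalanced in a very specific way: every inner node has a left subtree whose height is exactly one more than its right subtree. The key observation is that this is the \emph{worst case} for an AVL-tree, so insertions (and the resulting rotations) can only decrease this pervasive imbalance; the trouble is that a single insertion of a new value $x_i$ can trigger up to $\Theta(\log n)$ rotations walking back up to the root, and we are not allowed to charge anything to the bottom intervals already present.

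First I would set up the standard machinery: recall that in an AVL-tree an insertion creates one new leaf, then performs a search-path walk upward, during which each node is either \emph{balanced}, \emph{left-heavy}, or \emph{right-heavy}; a rebalancing step (single or double rotation) is performed only at the first node whose balance factor would become $\pm 2$, and after at most one such rotation the subtree height returns to what it was before the insertion, so the walk stops. The subtle point exploited here is that the only expensive behaviour is a long chain of height increases \emph{without} a rotation (each such node flips from balanced to heavy). I would define a potential $\Phi(T) = c \cdot (\#\text{ of balanced inner nodes of }T)$ for a suitable constant $c$, or equivalently count ``unbalanced'' (heavy) nodes and argue the complementary quantity. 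The crucial input from the hypothesis is that the \emph{initial} tree $T_0$ has \emph{zero} balanced inner nodes (every node is left-heavy by one), so $\Phi(T_0)$ is as small as it can be — which is exactly backwards from what we want if $\Phi$ counts balanced nodes. So instead I would let the potential count the number of inner nodes that are \emph{heavy} (balance factor $\ne 0$); then $\Phi(T_0)$ is \emph{maximal}, roughly the number of inner nodes, and each insertion can release potential to pay for its work.

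The heart of the argument is the accounting for one insertion. A new leaf is added (paying $O(1)$ out of pocket). Walking up, let $k$ be the number of consecutive ancestors that turn from heavy to balanced, or from balanced to heavy, before either the walk naturally stops or a rotation occurs. Along a height-increasing chain, each node that was left-heavy (the initial state) and now gains height on its already-taller side becomes ``balance factor $-2$'', triggering a rotation immediately at that node — so in fact in $T_0$ the chain length before the first rotation is $O(1)$! More carefully: an all-left-heavy path cannot be walked far, because the first ancestor on the left spine becomes doubly-left-heavy after one height increase. Thus the only long walks happen after the tree has already accumulated balanced nodes from previous insertions, and those balanced nodes, when flipped to heavy by the current insertion, are paid for by potential \emph{deposited} by whichever earlier insertion created them. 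Concretely I would show: a single rotation decreases $\Phi$ by at least the number of nodes it rebalances minus a constant, and a node becoming heavy on the search path increases $\Phi$ by $1$ but can be charged to the constant work of that insertion because such a node must previously have been made balanced by an earlier rotation or by the current search-path descent. Summing, the amortized cost of an insertion is $O(1)$, and since $\Phi \ge 0$ always and $\Phi(T_0) = O(n)$, the total rebalancing cost over all $n$ insertions is $O(n) + \sum(\text{amortized}) = O(n)$ — but more importantly it is $O(1)$ per insertion \emph{after} the preprocessing, which is charged to the $O(n\log n)$ preprocessing bound of Lemma~\ref{lem:preprocessing}, not to the reconstruction time.

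The main obstacle I expect is making the charging argument airtight in the presence of double rotations and of the $\infty$-padding leaves: double rotations rebalance up to three nodes at once and change balance factors in a less monotone way, so the bookkeeping $\Delta\Phi \le -(\text{work}) + O(1)$ needs to be verified case by case (left-left, left-right, right-right, right-left), and one must check that the dummy $\infty$-leaves — which are never the target of a search but do sit in the tree — do not spoil the invariant that the tree stays a valid AVL-tree. I would handle this by first proving a clean structural lemma: after any sequence of insertions into a tree all of whose inner nodes are initially heavy, every node that is ever balanced was made balanced either (a) by the search-path descent of the insertion that is currently processing it, or (b) by a rotation, and in case (b) the rotation that balanced it already ``pre-paid'' one unit of potential for it. With that lemma the global $O(1)$-amortized bound follows by a routine summation, and I would relegate the four rotation-case verifications to a short figure-driven case analysis, exactly as the paper defers it to Appendix~\ref{appx:avl}.
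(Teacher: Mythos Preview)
Your underlying idea --- charge each expensive $0\to\pm1$ balance-factor flip to the earlier insertion that created that $0$, using that the initial Fibonacci tree has \emph{no} $0$-nodes --- is exactly the paper's argument in Appendix~\ref{appx:avl}. The paper cites Mehlhorn and Tsakalidis for the structural fact that an AVL insertion walks up a chain of $0$'s, flipping each to $\pm1$, until it hits a $\pm1$-node where it terminates (creating at most $O(1)$ new $0$'s, whether by a $\pm1\to0$ flip or by a rotation), and then observes in one sentence that since $T_0$ contains no $0$'s, every $0$ ever consumed was produced by a unique earlier insertion.

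Where you go wrong is in second-guessing your potential. Your first instinct, $\Phi=\#\{\text{balanced inner nodes}\}$, is the \emph{correct} choice, and $\Phi(T_0)=0$ is exactly what you want, not ``backwards''. The identity $\sum(\text{real})=\sum(\text{amortized})+\Phi(T_0)-\Phi(\text{final})$ together with $\Phi\ge0$ gives $\sum(\text{real})\le\sum(\text{amortized})$ precisely because $\Phi(T_0)=0$. An insertion that walks up $k$ zero-nodes has $\Delta\Phi=-k+O(1)$ (the $+O(1)$ from the new decision node and the terminating action), so the amortized cost is $(k+O(1))+(-k+O(1))=O(1)$.

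Your replacement $\Phi=\#\{\text{heavy nodes}\}$ does \emph{not} work: the same walk gives $\Delta\Phi=+k-O(1)$, hence amortized cost $\approx 2k$, not $O(1)$. Nor can the large initial value $\Phi(T_0)\approx n$ be ``charged to preprocessing'': it appears as an additive $O(n)$ in the bound on \emph{reconstruction} work, which destroys the $O(\SO(\RR))$ claim whenever most intervals are bottom intervals and $\SO(\RR)$ is sub-linear. Your own verbal charging in the third paragraph (``paid for by potential deposited by whichever earlier insertion created them'') is in fact consistent with $\Phi=\#\{\text{balanced}\}$, not with $\Phi=\#\{\text{heavy}\}$; revert to your first choice and the proof goes through, with no rotation-by-rotation case analysis needed since any rotation touches only $O(1)$ nodes and hence changes $\Phi$ by only $O(1)$.
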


\begin{figure}[t]
\centering
\includegraphics[page=5, , width =\linewidth]{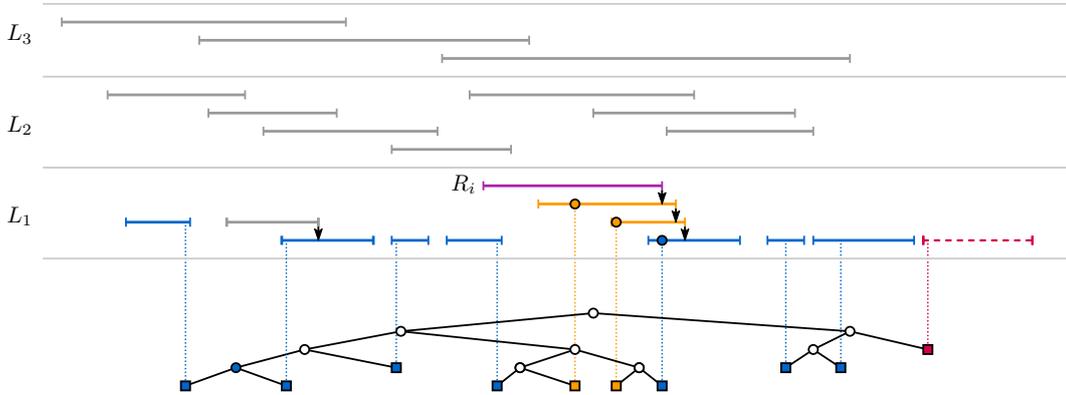}
\caption{ The tree $T$ from Figure~\ref{fig:preprocessing} after two iterations in the reconstruction phase. We inserted the true values of the two orange intervals. Note that an orange interval requested the true value of a bottom interval. At this iteration we want to insert the point $x_i$ of $R_i$ into $T$. $R_i$ is a non-bottom interval in $W_1$ so its anchor must be the interval preceding it. 
%This interval gives us a pointer to a leaf on the domain of $R_i$ and from there we can locate any point on this domain in $\Oh (\log 4)$ traversals.
}
\label{fig:algo}
\end{figure}
\begin{theorem}\label{thm:algo}
Given $\Xi$, we can reconstruct an AVL-tree on $X$ in $\Theta(\SO^\pi(\RR))$ time.
\end{theorem}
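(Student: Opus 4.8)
The plan is to insert the points of $X$ into $T$ in the level-permutation order $\pi$, processing each interval $R_i$ by using its anchor pointer to start the search from a leaf deep inside $T$ rather than from the root, and then walking upward/locally to place $x_i$. The key invariant I would maintain is: when we are about to insert $x_i$, the anchor $\RA(R_i)$ points to an interval whose true value $x_{\text{anchor}}$ is already in $T$ (this is why anchors for $R \in L_1$ point to the preceding interval, which is processed earlier since the level is ordered right-to-left, and why anchors for $R \notin L_1$ point to an interval strictly contained in $R$, which lies in a strictly lower level and is hence processed earlier). So the leaf for $x_{\text{anchor}}$ is already present, and we use the leaf pointer / doubly linked list to begin navigation from there.

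The core cost argument: I want to show we can locate $x_i$ in $T$ and insert it spending only $\Oh(\log |\CS_i|)$ time. The crucial geometric fact is that $R_i$ contains at least $|\CS_i|-1$ endpoints of intervals in its contact set in its interior (the same fact used in Lemma~\ref{lemma:lowerbound}), and—because $\pi$ is containment-compatible and the anchor sits inside $R_i$—the number of points of $X$ already inserted into $T$ that lie between $x_{\text{anchor}}$ and $x_i$, and more generally the number of leaves of $T$ strictly between the two relevant positions, is $\Oh(|\CS_i|)$. Hence starting from the leaf of $x_{\text{anchor}}$ and walking up the tree until the subtree we are in spans an interval of $T$ containing $x_i$'s target position, then walking down, costs $\Oh(\log |\CS_i|)$: the height we must ascend is logarithmic in the number of leaves in that subtree, and that number is $\Oh(|\CS_i|)$. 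I would make this precise by arguing the lowest common ancestor of the leaf of $x_{\text{anchor}}$ and the insertion point of $x_i$ has a subtree whose leaves all correspond to values in an interval of the real line of "combinatorial width" $\Oh(|\CS_i|)$ among already-inserted points. Summing $\Oh(\log|\CS_i|)$ over all $i$ gives $\Oh(\sum_i \log|\CS_i|) = \Oh(\SO^\pi(\RR))$; bottom intervals contribute $\log 1 = 0$ and indeed incur no work since they are already in $T$ from preprocessing.

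For rebalancing: each insertion triggers AVL rebalancing, but by Lemma~\ref{lemma:amortized} the balance maintenance costs amortized $\Oh(1)$ per insertion, and there are at most $n = \Oh(\SO^\pi(\RR) + (\#\text{bottom intervals}))$ insertions — here I must be careful, since if there are many bottom intervals the $\Oh(n)$ term could exceed $\Oh(\SO^\pi(\RR))$. But bottom intervals are inserted for free in preprocessing, so the reconstruction phase only performs $n - |X^b|$ insertions, one per non-bottom interval, and each non-bottom interval has $|\CS_i| \ge 2$ hence $\log|\CS_i| \ge 1$; thus the number of rebalancing operations is $\Oh(\sum_{i : |\CS_i|\ge 2} 1) = \Oh(\sum_i \log|\CS_i|) = \Oh(\SO^\pi(\RR))$. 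Combining the navigation cost and the rebalancing cost gives the $\Oh(\SO^\pi(\RR))$ upper bound; the matching $\Omega(\SO^\pi(\RR))$ lower bound follows from the lower-bound theorem in Section~\ref{sec:lowerbound}, giving $\Theta(\SO^\pi(\RR))$.

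The main obstacle I anticipate is making the navigation-cost claim rigorous: I need a clean argument that walking from the anchor's leaf to $x_i$'s position visits only $\Oh(\log|\CS_i|)$ nodes. The subtlety is that the tree $T$ grows dynamically, so "the number of leaves between two positions" changes over time, and a region's anchor may be an arbitrary contained interval whose true value could be anywhere inside $R_i$; I must show that all of $R_i$ (hence the whole real span containing both $x_{\text{anchor}}$ and $x_i$) contains only $\Oh(|\CS_i|)$ already-inserted points. This is where containment-compatibility is essential: any already-inserted point lying in $R_i$ belongs to an interval processed earlier, which—since $\pi$ is containment-compatible and we only recurse into contained/earlier-level intervals—must either be in $\CS_i$ or be a bottom interval intersecting $R_i$, and in either case such intervals contribute an endpoint inside $R_i$, bounding their count by $\Oh(|\CS_i|)$. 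I would also need the doubly linked list of leaves and the Fibonacci-tree structure to argue the ascent cost is genuinely logarithmic in the local leaf count rather than in $n$; the Fibonacci/AVL balance guarantees that a subtree with $k$ leaves has height $\Oh(\log k)$, which closes the argument.
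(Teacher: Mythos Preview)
Your approach is essentially the paper's: use the anchor to jump to a leaf already in $T$, bound the number of leaves of $T$ lying on the domain of $R_i$ by $\Oh(|\CS_i|)$ (any such leaf comes from an earlier-processed interval intersecting $R_i$, hence from a member of $\CS_i$), navigate locally in $\Oh(\log|\CS_i|)$ time, and amortize rebalancing via Lemma~\ref{lemma:amortized}.

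There is one gap. You correctly distinguish the two anchor cases, but your cost argument then asserts without qualification that ``the anchor sits inside $R_i$'' and that the span containing $x_{\text{anchor}}$ and $x_i$ is contained in $R_i$. For $R_i \in L_1$ this is false: the anchor is $R_{i-1}$, which merely contains the right endpoint of $R_i$, and $x_{i-1}$ may lie anywhere in $R_{i-1}$, possibly well to the right of $R_i$. The number of already-inserted leaves between $x_{i-1}$ and $x_i$ is then only bounded by $\Oh(|\CS_i| + |\CS_{i-1}|)$, not $\Oh(|\CS_i|)$. The paper closes this with a charging step you are missing: from $R_{i-1}$'s leaf, first locate the right endpoint of $R_i$ (which \emph{does} lie in $R_{i-1}$) in $\Oh(\log|\CS_{i-1}|)$ time, then navigate inside $R_i$ in $\Oh(\log|\CS_i|)$ time; the extra $\Oh(\log|\CS_{i-1}|)$ is charged to $R_{i-1}$, and since each interval in $L_1$ is the anchor of at most one successor, each is charged at most once. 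With this fix your argument matches the paper's.
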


\begin{proof}
Given $\Xi$ and the level permutation $\pi$ we want to sort the points in $X$ (insert them into $T$) in $\Oh(\SO^\pi(\RR))$ time. Because $T$ starts as a Fibonacci tree, Lemma~\ref{lemma:amortized} guarantees that we can dynamically maintain the balance of $T$ with at most $\Oh(\SO^\pi(\RR))$ operations.
The bottom intervals are already in $T$, thus we need to insert only the remaining $x_i \in X \backslash X^b$, in the order $\pi$, into $T$ in $\log |\CS_i|$ time plus some additional time which we charge to the anchor (each anchor will only get charged once).

Whenever we process a non-bottom interval $R_i$ we know that its anchor is already inserted in $T$. By construction, there are at most $\mathcal{O}(|\CS_i|)$ leaves in $T$ which have coordinates on the domain of $R_i$ (because these values can come only from intervals in the contact set of $R_i$). 
We know that we must insert $x_i$ next to one of these $\mathcal{O}(|\CS_i|)$ leaves in $T$. 
This means that if we have a pointer to any leaf on the domain of $R_i$, then we locate $x_i$ in $T$ with at most $O(\log |\CS_i|)$ edge traversals. During these traversals, we collapse each interval we encounter to a point.
We obtain such a pointer from $\RA(R_i)$. Assume $\RA(R_i) \subset R_i$. Then the leaf corresponding to $\RA(R_i)$ must lie on the domain of $R_i$. 
Otherwise, $R_i$ and $\RA(R_i)$ are both in the level $L_1$ (illustrated in Figure~\ref{fig:algo}) and $\RA(R_i) = R_{i-1}$ and must contain the right endpoint of $R_i$. With a similar analysis, $R_{i-1}$ can locate the right endpoint of $R_i$ in $T$ in $\Oh(\log |\CS_{i-1}|$ time. In both cases we found a leaf of $T$ in $R_i$ and locate $x_i$ in $T$ in $O(\log |\CS_i|)$ time. Each interval in $L_1$ has a unique anchor, so each anchor in $L_1$ is charged this extra work once.
\end{proof}
\section{Quadtrees}\label{sec:quadtrees}

Let $\RR = \{R_1, R_2, \ldots, R_n\}$ be a set of unit intervals in a bounding box (interval) $\mathcal{B}$ (we discuss how to extend the approach later) and let $X = \{x_1, x_2, \ldots, x_n\}$ be a set of points (values) with $x_i \in R_i$. We show how to construct an auxiliary structure $\Xi$ on \RR in the preprocessing phase without using $X$, such that, in the reconstruction phase, we can construct a linear-size quadtree $T$ on $X$ in $\Theta(\SO(\RR))$ time.
%To achieve this,
We recall several standard definitions.

\subparagraph*{Point quadtrees.}
Suppose that we have a $d$-dimensional point set $X$ in a bounding hypercube $\mathcal{B}$. 
A quadtree on $(\mathcal{B}, X)$ is defined as follows:  \emph{split} operator is an operator that splits any $d$-dimensional hypercube into $2^d$ equal-sized hypercubes called \emph{cells}. We recursively split $\mathcal{B}$ until each point $p \in P$ lies within a unique cell \cite{samet1984quadtree}. A $\lambda$-deflated quadtree is a more relaxed quadtree where $\mathcal{B}$ is split until each leaf cell contains at most $\lambda$ points \cite{buchin2011delaunay}.

\begin{figure}[t]
\centering
\includegraphics {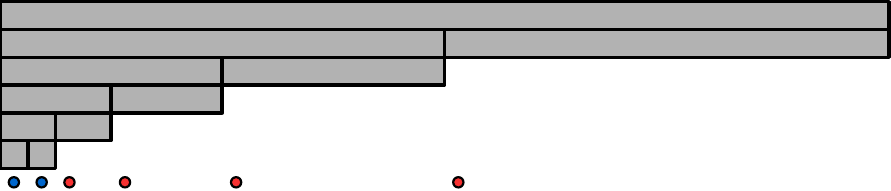}
\caption{A set of points $\RR$ where the quadtree on $\RR$ has linear depth. If the blue points lie very close, the quadtree on $\RR$ needs unbounded complexity. }
\label{fig:imbalance}
\end{figure}

\subparagraph*{Region quadtrees.}

Let $\RR$ be a set of $d$-dimensional disks in a bounding hypercube $\mathcal{B}$. Let $\mathcal{T}(\mathcal{B})$ be the infinite set of possible quadtree cells on $\mathcal{B}$. For each $R_i \in \RR$, we define its \emph{storing cell} denoted by $C_i$ as the largest cell in $\mathcal{T}(\mathcal{B})$ that is contained in $R_i$  and contains the center of $R_i$ \cite{loffler2013dynamic}. $T_i$ is the subtree induced by $C_i$. The \emph{neighborhood} of $R_i$ is the set of possible cells $C \in \mathcal{T}(\mathcal{B})$ with size $|C_i|$ that are intersected by $R_i$. We consider the quadtree $T$ on $\RR$ to be the unique compressed quadtree where for each $R_i \in \RR$, its neighborhood is in $T$.

\begin{figure}[b]
\centering
\includegraphics{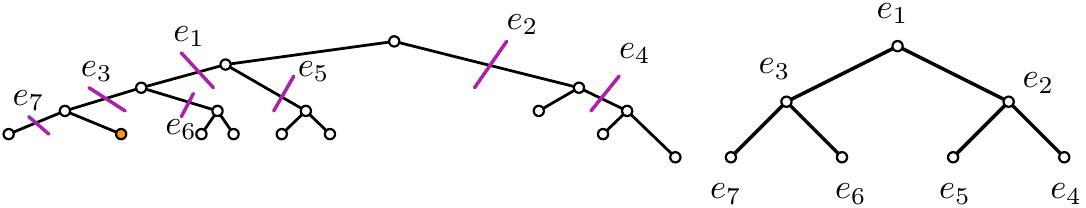}
\caption{(left) A tree $T$ with recursive centroid edges. (right) The corresponding edge-oracle tree $E$. The orange leaf is a subtree of $T$ and its corresponding node in $E$ is $e_3$. }
\label{fig:edgeoracle}
\end{figure}

\subparagraph*{Edge oracle tree.}
Depending on $\mathcal{B}$ and $X$, the quadtree on $(\mathcal{B}, X)$ does not necessarily have logarithmic depth (Figure~\ref{fig:imbalance}) thus, point location in $T$ is non-trivial. Har-Peled \cite{har2011geometric} introduced a fast point-location structure (later dubbed \emph{edge-oracle tree} \cite{loffler2013dynamic}) for any quadtree $T$. The edge-oracle tree $E$ is created through \emph{centroid decomposition}. Any tree with bounded degree $\delta$ has at least one \emph{centroid edge} which separates a tree of $n$ nodes into two trees with at least $\frac{n}{\delta}$ and at most $n -\frac{n}{\delta}$ nodes each. Moreover, one of these 2 trees is a \emph{subtree} of $T$ (a tree induced by a node as a root). 
For any subtree $T'$ of $T$, we define its \emph{corresponding node} in $E$ (edge in $T$) as the lowest node in $E$ which splits $T$ into two parts, one of which contains $T'$ and the other contains the root of $T$. This node must exist, is unique and the subtree containing $T'$ has $\Oh(|T'|)$ nodes (refer to Figure~\ref{fig:edgeoracle}).

Given a query point $q$, we can find the leaf cell $C_q$ that contains $q$ in the following way: each decision node $v$ of $E$ has 2 children where 1 child node $w$ corresponds to a subtree $T_w$ of $T$. We test whether $q$ is contained in $w$ in $\mathcal{O}(1)$ time by checking the bounding box of $T_w$.

\vspace{1em}\noindent
We wish to preprocess $\RR$ such that we can reconstruct a linear-size $\lambda$-deflated quadtree $T$ for $X$ with pointers between leaves. However, $T$ does not necessarily have linear size and dynamically maintaining pointers between leaves is non-trivial.
To achieve this, one needs to maintain a \emph{compressed} and \emph{smooth} quadtree $T$ (refer to Appendix \ref{appx:basic} for details) and Hoog \etal \cite{hoog2018dynamic} show how to dynammically maintain a smooth compressed quadtree with constant update time. We will build such a quadtree augmented with an edge-oracle tree initialized as a Fibonacci tree. We proceed analogously to the approach in Section~\ref{sec:sorting}.
 
\subsection{1-dimensional quadtrees on unit-size intervals}

We show how to construct an auxiliary structure $\Xi$ on \RR without using $X$, such that we can construct a $2$-deflated quadtree $T$ on $(\mathcal{B},X)$  in $\Theta(\SO(\RR))$ time.

\subparagraph*{Preprocessing phase.}

The auxiliary structure $\Xi$ will be a smooth compressed quadtree $T$ on the intervals $\RR$ augmented with an edge-oracle tree $E$ on $T$, anchor pointers, and a containment-compatible processing permutation $\pi$ of $\RR$.  Given $T$, we initialize $E$ as a Fibonacci tree, possibly adding dummy leaves\footnote{We may need to allow parents of leaves of $T$ to have a single dummy leaf.}. We supply each $R_i$ with a pointer to the node in $E$ corresponding to $T_i$ and we call this its \emph{anchor} $\RA(R_i)$.

\begin{lemma}
The auxiliary structure $\Xi$ can be constructed in $\Oh(n \log n)$ time.
\end{lemma}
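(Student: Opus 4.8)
The plan is to mirror the preprocessing argument from Section~\ref{sec:sorting} (Lemma~\ref{lem:preprocessing}) and adapt each ingredient of $\Xi$ to the quadtree setting. Recall that $\Xi$ consists of four parts: a smooth compressed quadtree $T$ on $\RR$, an edge-oracle tree $E$ on $T$ initialized as a Fibonacci tree, anchor pointers $\RA(R_i)$, and a containment-compatible processing permutation $\pi$. I will account for the construction time of each in turn, and then argue the whole structure is linear-size and built in $\Oh(n\log n)$ time overall.

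First I would build the smooth compressed quadtree $T$ on $\RR$. By the definitions recalled above, each $R_i$ has a storing cell $C_i$ and a neighborhood of $\Oh(1)$ cells of size $|C_i|$ (in one dimension on unit intervals this is a constant number of cells), and $T$ is the unique compressed quadtree containing all neighborhoods. Since the input intervals all have unit size, all storing cells have size within a constant factor of each other, so the neighborhoods live at $\Oh(1)$ distinct scales and the set of $\Oh(n)$ relevant cells can be found by sorting the interval centers; the compressed quadtree over these $\Oh(n)$ cells then has linear size and is built in $\Oh(n\log n)$ time by the standard sorted-construction of compressed quadtrees (and it can be made smooth in linear additional time; see Appendix~\ref{appx:basic} and \cite{hoog2018dynamic}). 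Second, given $T$ of linear size, the edge-oracle tree $E$ is obtained by centroid decomposition in $\Oh(n)$ time, and padding it with dummy leaves to reach a Fibonacci shape is another linear pass, exactly as in the sorting case. Third, the anchor pointer $\RA(R_i)$ is the node of $E$ corresponding to the subtree $T_i$ rooted at $C_i$; since $C_i$ is a node of $T$ and the correspondence subtree-of-$T$ to node-of-$E$ is well defined and computable while building $E$, I can attach all $n$ anchor pointers in $\Oh(n)$ total time. Fourth, the processing permutation $\pi$ is a containment-compatible order of $\RR$; by Theorem~\ref{thm:permutation} the level permutation satisfies this and is constructible in $\Oh(n\log n)$ time, so I simply reuse it.

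Summing: the dominating term is the $\Oh(n\log n)$ needed to sort centers (for both the compressed quadtree construction and the level permutation); everything else is linear. Hence $\Xi$ is built in $\Oh(n\log n)$ time, and by the size bounds on compressed quadtrees on $\Oh(n)$ cells it has linear size.

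The main obstacle I expect is the quadtree construction itself, i.e. showing cleanly that the smooth compressed quadtree on $\RR$ together with all $R_i$-neighborhoods has linear size and can be built in $\Oh(n\log n)$ time. The unit-size assumption is what makes this go through — it bounds the number of scales — but one must be careful that compression does not blow up the size and that smoothing adds only linearly many cells; I would lean on the cited machinery (\cite{loffler2013dynamic,hoog2018dynamic} and Appendix~\ref{appx:basic}) rather than reprove it. The remaining steps (centroid decomposition, Fibonacci padding, anchor assignment) are routine adaptations of Lemma~\ref{lem:preprocessing}.
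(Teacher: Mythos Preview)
Your proposal is correct and follows essentially the same approach as the paper: the paper simply invokes \cite{hoog2018dynamic} as a black box to build the smooth compressed quadtree $T$ together with its edge-oracle tree $E$ in $\Oh(n\log n)$ time, and then pads $E$ into a Fibonacci tree with $\Oh(n)$ dummy leaves (which it accounts as $\Oh(n\log n)$ rather than your linear, but this is immaterial to the final bound). Your additional remarks on anchor pointers and the level permutation are fine and in fact more explicit than the paper, which silently treats those as immediate.
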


\begin{proof}
Hoog \etal~\cite{hoog2018dynamic} show that for any set of $d$-dimensional disks $\RR$, its smooth compressed quadtree $T$ on $\RR$ with corresponding edge-oracle tree $E$ can be constructed in $\Oh(n \log n)$ time and that this tree has a worst-case constant update time. We turn $E$ into a Fibonacci tree by inserting at most $\Oh(n)$ dummy leaves in $\Oh(n \log n)$ time in total.
\end{proof}

\subparagraph*{Reconstruction phase.}

By construction, each leaf in $T$ intersects at most 2 bottom intervals of $\RR$ (since these are mutually disjoint). Therefore, we can construct a $2$-deflated quadtree on $X$ by inserting each $x_i \in X \backslash X^b$ in the order $\pi$ into $T$. We observe the following:

\begin{lemma}
\label{lemma:size}
When we process an interval $R_i \in \RR$, $R_i$ intersects $\Oh(|\CS_i|)$ leaf cells of $T$.
\end{lemma}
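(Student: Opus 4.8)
The plan is to bound the number of leaf cells of the current quadtree $T$ intersected by $R_i$ by relating these cells to intervals that must belong to the contact set $\CS_i$. The key structural fact is that $T$ is a \emph{smooth compressed quadtree} built on the regions $\RR$, so its cells come in a controlled range of sizes near any given region, and — crucially — by the time we process $R_i$ in the permutation $\pi$, the only \emph{points} $x_j$ that have been inserted into $T$ are those whose intervals $R_j$ were processed earlier; since $R_i$ has unit size, any such $x_j$ that lies inside (or near) $R_i$ forces $R_j$ to intersect $R_i$, hence $R_j \in \CS_i$ (using that $\pi$ is containment-compatible, so earlier-or-equal intervals intersecting $R_i$ are exactly $\CS_i$).

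\textbf{Step 1: classify the relevant cells by size.} First I would recall that in a smooth quadtree the leaf cells overlapping a unit interval $R_i$ have size within a constant factor of the storing cell $C_i$ (this is exactly what smoothness buys us; see Appendix~\ref{appx:basic}). Concretely, the neighborhood of $R_i$ consists of $\Oh(1)$ cells of size $|C_i|$, and the smooth condition means a leaf cell adjacent to a cell of size $s$ has size in $[s/2, 2s]$ (or a similar constant window). So the leaf cells of $T$ that $R_i$ intersects, restricted to cells \emph{not much smaller} than $|C_i|$, already number only $\Oh(1)$.

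\textbf{Step 2: charge the small cells to the contact set.} The genuinely small leaf cells inside $R_i$ are the issue: these exist only because some earlier-processed point $x_j$ was inserted and caused a split. In a $2$-deflated quadtree, each leaf holds at most $2$ points, so the number of small leaf cells inside $R_i$ is $\Oh(1 + (\text{number of inserted points } x_j \text{ lying in the region covered by these cells}))$. Since all these cells lie inside (or within a constant-size neighborhood of) the unit interval $R_i$, every such $x_j$ satisfies $x_j \in R_j$ with $R_j$ within unit distance of $R_i$ — but I need $R_j \cap R_i \neq \emptyset$, not merely proximity. Here I use that $R_j$ has unit length and $x_j$ lies in the $\Oh(1)$-cell-neighborhood of $R_i$; choosing the smoothness/deflation constants appropriately (and possibly working with the neighborhood of $R_i$ rather than $R_i$ itself, which only inflates constants), $x_j$ being that close to $R_i$ together with $|R_j| = 1$ forces $R_j \cap R_i \neq \emptyset$, so $R_j \in \CS_i$ since $\pi$ is containment-compatible and $R_j$ precedes $R_i$. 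Summing: the number of small leaf cells is $\Oh(|\CS_i|)$, and combined with Step~1 the total is $\Oh(|\CS_i|)$.

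\textbf{The main obstacle} I expect is the geometric bookkeeping in Step~2: making precise the claim that every inserted point lying in the region covered by the small leaf cells of $R_i$ belongs to an interval in $\CS_i$. One has to be careful that the relevant "region" is not $R_i$ itself but a constant-factor-inflated neighborhood (because quadtree cells near $R_i$ can stick out slightly, and smoothness cells extend a bit beyond), and then argue that a unit interval $R_j$ containing a point in that neighborhood still intersects $R_i$ — which is true because the neighborhood has width $\Theta(1)$ and $R_j$ has width exactly $1$, but the constants must be chosen consistently with the deflation parameter $\lambda = 2$ and the smoothness definition. A secondary subtlety is handling leaf cells that are \emph{larger} than $C_i$: these are bounded by $\Oh(1)$ directly from the compressed/smooth structure and do not need the contact-set charging argument at all. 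Once the size classification is in place, the counting itself is routine.
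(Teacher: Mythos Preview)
Your approach is broadly correct and lands on the same charging idea as the paper, but it is considerably more elaborate than needed. The paper's proof is three sentences: (i) at most two bottom intervals (the ones immediately left and right of $R_i$) have a neighborhood intersecting $R_i$, contributing $\Oh(1)$ leaves; (ii) every other leaf on the domain of $R_i$ was created either by an already-processed point $x_j$ lying on the domain of $R_i$ --- and then $x_j \in R_j \cap R_i$ gives $R_j \in \CS_i$ immediately --- or is a dummy node; (iii) each dummy node is paired with a non-dummy node on the same domain.

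Two things you overcomplicate. First, the smoothness-based size classification in your Step~1 is unnecessary here: because all intervals are unit, every storing cell (and hence every leaf of the initial region quadtree) already sits at the same level, so the number of initial leaves meeting $R_i$ is $\Oh(1)$ without any appeal to smoothness. Second, your ``main obstacle'' largely dissolves. You do not need the delicate claim that a unit interval $R_j$ containing a point in an inflated neighborhood of $R_i$ must still intersect $R_i$. The paper simply uses that a processed point $x_j$ lying on the domain of $R_i$ satisfies $x_j \in R_j \cap R_i$, hence $R_j \in \CS_i$ with no geometric proximity argument at all. In one dimension the only leaves that can contain a point outside $R_i$ while still intersecting $R_i$ are the (at most two) boundary leaves, and those are already absorbed in the $\Oh(1)$ term; all remaining leaves are fully contained in $R_i$, so the points responsible for them are in $R_i$ outright.
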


\begin{proof}
There can be at most 2 bottom intervals (left and right) of $R_i$ whose neighborhood intersects $R_i$. All the other leaves on the domain of $R_i$ are caused by either already processed points on the domain of $R_i$ or are dummy nodes. For each dummy node there is a corresponding non-dummy node also on the domain of $R_i$.
\end{proof}

\begin{lemma}
\label{lemma:quadtreetime}
When we process an interval $R_i$, we can locate, for any point $q \in R_i$, the leaf $C_q \in T$ which contains $q$ in $\Oh( \log |\CS_i|)$ time.
\end{lemma}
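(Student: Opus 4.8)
The plan is to mimic the point-location argument from the proof of Theorem~\ref{thm:algo}, but using the edge-oracle tree $E$ instead of the balanced search tree $T$ itself. First I would observe that, by Lemma~\ref{lemma:size}, the interval $R_i$ intersects only $\Oh(|\CS_i|)$ leaf cells of $T$, and the corresponding nodes of these cells in $E$ form a connected subtree $E_i$ of $E$ of size $\Oh(|\CS_i|)$ together with the $\Oh(|\CS_i|)$ internal nodes of $T$ whose cells are intersected by $R_i$. So once we have a pointer to \emph{any} node of $E$ that corresponds to a cell on the domain of $R_i$, we can walk to the node $C_q$ for any query point $q\in R_i$ in $\Oh(\log|\CS_i|)$ steps: at each decision node of $E$ we test in $\Oh(1)$ time (by inspecting the stored bounding box of the associated subtree of $T$) whether $q$ lies in that subtree, and because $E$ was initialized as a Fibonacci tree and is maintained balanced, the portion of $E$ restricted to cells meeting $R_i$ has depth $\Oh(\log|\CS_i|)$.

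Next I would supply the entry pointer via the anchor $\RA(R_i)$, exactly as in Section~\ref{sec:algorithm}. By construction $\RA(R_i)$ is the node of $E$ corresponding to the subtree $T_{j}$ induced by the storing cell $C_{j}$ of some interval $R_j$ that is either contained in $R_i$ (the generic case) or precedes $R_i$ in the level order and contains the right endpoint of $R_i$ (the $L_1$ boundary case). Since $R_j$ is processed before $R_i$, its anchor node is already present in $E$ and its cell $C_j$ lies on the domain of $R_i$ (it is contained in $R_i$, or overlaps $R_i$ near its right endpoint); in either case this node is within distance $\Oh(\log|\CS_i|)$ in $E$ of every cell meeting $R_i$, so starting the search from $\RA(R_i)$ and walking up to the lowest common ancestor region and then down to $C_q$ costs $\Oh(\log|\CS_i|)$. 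In the boundary case the extra traversal is charged once to the anchor $R_{i-1}$, exactly as in the proof of Theorem~\ref{thm:algo}.

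The main obstacle I anticipate is justifying that the relevant part of $E$ has depth $\Oh(\log|\CS_i|)$ rather than $\Oh(\log n)$: a centroid-decomposition tree on all of $T$ has global depth $\Oh(\log n)$, but we need the claim \emph{locally}, i.e., that the minimal subtree of $E$ spanning all nodes whose $T$-cells intersect $R_i$, together with the path from $\RA(R_i)$, has depth logarithmic in $|\CS_i|$ and not in $n$. This follows because each such node of $E$ corresponds to a subtree of $T$ that lives under a common ancestor cell of controlled size, and the centroid property ensures that a subtree of $T$ with $k$ nodes corresponds to a node of $E$ at depth $\Oh(\log k)$ from the root of that local region; combined with $|E_i| = \Oh(|\CS_i|)$ from Lemma~\ref{lemma:size} and the balancedness inherited from the Fibonacci initialization and the constant-time smooth-quadtree updates of Hoog~\etal~\cite{hoog2018dynamic}, we get the bound. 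As in the sorting case, we also collapse each cell encountered during the traversal and account for rebalancing of $E$ separately via the amortized analysis, so that the work genuinely charged to $R_i$ is $\Oh(\log|\CS_i|)$. $\qed$
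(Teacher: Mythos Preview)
Your proposal has two genuine gaps that would cause the argument to fail as written.

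First, you are using the wrong anchor. In the quadtree setting (Section~\ref{sec:quadtrees}) the anchor $\RA(R_i)$ is \emph{not} defined via a preceding or contained interval $R_j$ as in Section~\ref{sec:algorithm}. The paper explicitly redefines it: ``We supply each $R_i$ with a pointer to the node in $E$ corresponding to $T_i$,'' where $T_i$ is the subtree rooted at $R_i$'s own storing cell $C_i$. There is no level-$L_1$ boundary case here and no charging to $R_{i-1}$; that machinery belongs to the sorting section and does not carry over.

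Second, and more seriously, your claim that the cells of $T$ meeting $R_i$ induce a connected subtree $E_i$ of $E$ of size $\Oh(|\CS_i|)$, with depth $\Oh(\log|\CS_i|)$ from the anchor, is not justified and is in general false. The leaves of $T$ intersecting $R_i$ are contiguous in $T$, but $R_i$ may straddle the boundary of its storing-cell subtree $T_i$, so those leaves live in $T_i$ \emph{and} in one or two neighbouring subtrees of $T$. In the centroid decomposition $E$ these pieces need not sit under a common low node; the centroid edges separating them can be arbitrarily high in $E$, so walking ``up to the lowest common ancestor region and then down'' may cost $\Theta(\log n)$, not $\Oh(\log|\CS_i|)$. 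The paper avoids this entirely: it never goes up in $E$. Instead it uses the anchor to enter $E$ at the node corresponding to $T_i$, below which there are $\Oh(|T_i|)=\Oh(|\CS_i|)$ nodes (by the defining property of ``corresponding node'' together with Lemma~\ref{lemma:size}); if $q\in T_i$ it searches there, and if $q\notin T_i$ it first finds the extreme leaf of $T_i$, follows a \emph{neighbour pointer of the smooth quadtree} into the adjacent subtree $T_q$ (which again has $\Oh(|\CS_i|)$ nodes by Lemma~\ref{lemma:size}), and searches there. The neighbour pointer is the missing idea in your argument; without it you cannot cross from $T_i$ to the adjacent subtree in $\Oh(1)$ time, and the $\Oh(\log|\CS_i|)$ bound does not follow.
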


\begin{proof}
 If $C_q \in T_i$ then $R_i$ has an anchor to $T_i$ and from this anchor we locate $C_q$ in $\Oh(\log |\CS_i|)$ time. Suppose $C_q$ is to the left of $T_i$.  We locate the left-most leaf of $T_i$ in  $\Oh(\log |\CS_i|)$ time and traverse its neighbor pointer. The neighboring cell must lie in a subtree $T_q$ neighboring $T_i$ with $\Oh(|\CS_i|)$ nodes and this tree must contain $C_q$ (Lemma~\ref{lemma:size}). We now have a pointer to a node in $T_q$ and from this node we locate $C_q$ in $\Oh(\log |\CS_i|)$ time.
\end{proof}

\begin{theorem}
\label{thm:quadtrees}
Given $\Xi$, we can construct a $2$-deflated quadtree on $X$ in $\Theta(\SO^\pi(\RR))$ time.
\end{theorem}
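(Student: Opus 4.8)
The plan is to mirror the structure of the one-dimensional sorting algorithm from Section~\ref{sec:algorithm}, replacing the AVL-tree $T$ with the smooth compressed quadtree on $\RR$ and its edge-oracle tree $E$. First I would observe that, since $E$ is initialized as a Fibonacci tree and the underlying quadtree is smooth and compressed, the amortized-constant balancing argument of Lemma~\ref{lemma:amortized} applies verbatim: over the whole reconstruction phase the total number of structural update operations on $E$ is $\Oh(\SO^\pi(\RR))$. This handles the bottom intervals, for which we are not permitted to spend even constant time, because each bottom interval already corresponds to (at most two) existing leaf cells of $T$ and contributes nothing to $\SO^\pi(\RR)$.

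Next I would process the non-bottom intervals in the order $\pi$. For each such $R_i$ we must insert $x_i$ into $T$ (equivalently, refine the cell of $T$ containing $x_i$ until it has at most $2$ points) and we are allowed only $\Oh(\log|\CS_i|)$ time plus work charged to anchors. The key tool is Lemma~\ref{lemma:quadtreetime}: from the anchor $\RA(R_i)$, which points to the edge-oracle node corresponding to $T_i$, we can locate in $\Oh(\log|\CS_i|)$ time the leaf cell $C_q$ of $T$ that currently contains any query point $q \in R_i$, in particular $q = x_i$. Having located the leaf containing $x_i$, we perform the deflation: repeatedly split that cell (updating $T$, $E$, and neighbor pointers via the constant-time operations of Hoog~\etal) until $x_i$ and whatever at most one other point shared its cell are separated. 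Since both points lie in $R_i$ and a unit interval, by Lemma~\ref{lemma:size}, meets only $\Oh(|\CS_i|)$ leaf cells, the number of splits needed is $\Oh(\log|\CS_i|)$, and each split costs $\Oh(1)$ amortized by Lemma~\ref{lemma:amortized}. Summing $\Oh(\log|\CS_i|)$ over all $i$ gives $\Oh(\SO^\pi(\RR))$; the extra work charged to anchors (when $\RA(R_i)$ is not strictly inside $T_i$, so we must first walk to the boundary of $T_i$ and cross a neighbor pointer) is $\Oh(\log|\CS_{i'}|)$ for the anchor $R_{i'}$ and, as in Theorem~\ref{thm:algo}, each anchor is charged only a constant number of times, so this too sums to $\Oh(\SO^\pi(\RR))$.

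For the matching lower bound $\Omega(\SO^\pi(\RR))$, note that any $2$-deflated quadtree on $X$ that is output explicitly determines the sorted order of $X$ up to blocks of size $2$ — from it one recovers a linear order of $X$ in $\Oh(n)$ additional time — and producing it therefore requires $\Omega(\log\T(\RR)) = \Omega(\SO(\RR)) = \Omega(\SO^\pi(\RR))$ comparisons in the reconstruction phase by the results of Section~\ref{sec:lowerbound}, independent of preprocessing. Actually, a subtler point is that when $\SO^\pi(\RR)$ is sublinear we must return an implicit (linear-size, pointer-based) representation rather than a fully expanded quadtree; the output is precisely the updated $T$ together with $E$, and the bound above concerns the number of operations performed, which matches.

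The main obstacle I anticipate is the bookkeeping around neighbor pointers during deflation: when we split a leaf cell we create new leaves whose neighbor pointers must be consistent with the (possibly much coarser) cells on the other side, and we must argue that Lemma~\ref{lemma:quadtreetime}'s "walk to the boundary of $T_i$, cross one neighbor pointer, walk down" strategy still lands in a subtree of size $\Oh(|\CS_i|)$ after arbitrarily many earlier insertions. This is exactly what smoothness of the compressed quadtree buys us — it bounds the size ratio between adjacent cells — so the neighboring subtree $T_q$ of Lemma~\ref{lemma:quadtreetime} indeed has $\Oh(|\CS_i|)$ nodes; I would make sure the smoothness invariant is preserved by the constant-time update operations of Hoog~\etal, which it is by their analysis, and that dummy leaves never inflate these subtrees by more than a constant factor, as argued in Lemma~\ref{lemma:size}.
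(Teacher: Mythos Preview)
Your approach is essentially the paper's: initialize $E$ as a Fibonacci tree so Lemma~\ref{lemma:amortized} gives amortized-constant balancing, then for each non-bottom $R_i$ use Lemma~\ref{lemma:quadtreetime} to locate $x_i$ in $\Oh(\log|\CS_i|)$ time and insert. Two small corrections are in order. First, the leaf $C_{x_i}$ may be intersected by up to \emph{two} bottom intervals (not one), and the paper collapses both to points on the spot, so each non-bottom interval triggers up to three point insertions; your phrase ``whatever at most one other point shared its cell'' undercounts this. Second, the number of structural updates per point insertion is $\Oh(1)$ because $T$ is a \emph{compressed} quadtree---your justification via Lemma~\ref{lemma:size} does not actually bound the number of splits needed to separate two close points, but the compression does. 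Finally, the anchor-charging paragraph you import from Theorem~\ref{thm:algo} is unnecessary here: in the quadtree setting every anchor points directly to the edge-oracle node for $T_i$, and the boundary-crossing work in Lemma~\ref{lemma:quadtreetime} is already accounted for in the $\Oh(\log|\CS_i|)$ budget of $R_i$ itself.
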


\begin{proof}
Given $\Xi$ and any containment-compatible permutation $\pi$, we want to insert $X$ into $T$ in $O(\SO^\pi(\RR))$ time. An insertion in $T$ creates $2$ additional leaves in $T$ (and therefore also in $E$) and Lemma~\ref{lemma:amortized} guarantees that we can dynamically maintain the balance of $E$ with at most $O(\SO^\pi(\RR))$ operations. If we only consider the point set $X^b \subset X$ corresponding to the bottom intervals then $T$ is already a $2$-deflated quadtree on $X^b$ independent of where the points of $X^b$ lie in their uncertainty intervals. Therefore, we only need to insert the remaining $x_i \in X \backslash X^b$, in the order $\pi$, into $T$ in $\log |\CS_i|$ time (potentially collapsing some of the bottom intervals when necessary). Using Lemma~\ref{lemma:quadtreetime} we can locate the quadtree leaf $C_{x_i}$ that contains $x_i$ in $\Oh( \log |\CS_i|)$ time. This leaf is intersected by at most 2 bottom intervals, which we collapse into points whose location we locate in constant time using the leaf pointers. Thus each non-bottom interval inserts at most 3 points into $T$ in $\Oh( \log |\CS_i|)$ time.
\end{proof}

\subsection{Generalization}

If we stay in one dimension, then the result of Theorem \ref{thm:quadtrees} in fact generalizes to the case where $\RR$ is a set of arbitrary intervals since Lemma \ref{lemma:size} and \ref{lemma:quadtreetime} do not depend on the intervals being unit size. However, the result also generalizes to the case where $\RR$ is a set of unit-size disks in $d$ (constant) dimensions: first of all, any permutation of $\RR$ is containment-compatible. If the disks are unit size then each disk  intersects at most $K_d$ bottom disks where $K_d$ is the kissing number so Lemma \ref{lemma:size} generalizes. For any disk $R_i \in \RR$, recall that $T_i$ was the subtree of the storing cell of $R_i$. Any point $q \in R_i$ must lie in the \emph{perimeter} of $T_i$ which consists of at most $\Oh(5^d)$ subtrees of size $\Oh(|\CS_i|)$ therefore, Lemma \ref{lemma:quadtreetime} also generalizes.
The result is even more general: this approach works for any collection $\RR$ of unit-size \emph{fat} convex regions similar to, e.g. \cite{buchin2009delaunay}.
Interestingly, generalizing the result of Theorem~\ref{thm:quadtrees} both to higher dimensions and to non-unit regions at the same time is not possible: in Appendix~\ref {appx:disks} we show that, independent of preprocessing, reconstructing a $\lambda$-deflated quadtree has a lower bound of $\Omega (\log n)$, which could be more than $\SO(\RR)$.

\section{Conclusion}

We introduced the ambiguity $\SO(\RR)$ of a set of regions $\RR$ as a more fine-grained measure of the degree of their overlap. We applied this concept to uncertainty regions representing imprecise points. In the preprocessing model we show that the ambiguity is a natural lower bound for the time complexity of the reconstruction of any proximity structure. We achieved these results via a link to the entropy of partial orders which is of independent interest. If the regions are intervals in 1D we show how to sort in $\Theta(\SO(\RR))$ time, if the regions are unit balls in any dimension we show how to reconstruct quadtrees $\Theta(\SO(\RR))$ time. 

%   \ivor[wonders]{shouldn't this summary be a bit more specific and re-state that we sort faster than before and that we can build quadtrees in 1d for non-unit intervals too?}
%   \maarten{For SoCG, I don't think so. I want to end with a "geometric" point of view: say what we did, and what is our vision of the bright future that will arise once we study it more. We have already discussed the 1D results enough in the introduction.}

In the future we plan to investigate if our results can be generalized to other promixity structures such as Delaunay triangulations, minimum spanning trees, and convex hulls. In principle it is possible to convert quadtrees into all of these structures in linear time \cite{loffler2012triangulating}. However, it is not clear how to do so, when working with an implicit representation of the results in the case that $\SO(\RR)$ is sub-linear.

\bibliographystyle{abbrv}
\bibliography{hoog59}

\clearpage
\appendix

\section{Entropy of comparability and incomparability graphs} 
\label{appx:entropy}

K\"{o}rner \cite{korner1973coding} introduce the notion of the entropy of a graph. Let for any graph $G$, $\mathcal{A}_G$ be the space of independent sets of $G$. $\mathcal{A}_G$ is a convex subspace of $[0,1]^n$ where each integer-coordinate point in the space represents an independent subset of $G$. Let $\vec{x} = (x_1, x_2 \ldots x_n)$ be any (real-valued) point in $\mathcal{A}_G$. K\"{o}rner defines the graph entropy of $G$ as: $H(G) := \frac{1}{n} \max_{\vec{x} \in \mathcal{A}_G} \sum_i - \log x_i$ and this function is inspired by Shannon entropy.

Let $P$ be an arbitrary poset. The comparability graph $G_P$ of $P$ is the graph where there is an edge between $a,b \in P$ if $a$ and $b$ are comparable. The incomparability graph of $P$ is the graph where there is an edge $a,b \in P$ if $a$ and $b$ are incomparable and it is denoted by $\bar{G}_P$ since this is the complement of $G_P$.
Khan and Kim \cite{kahn1995entropy} define the entropy $H(P)$ of a poset $P$ as the entropy of $G_P$.
The more natural quantity to consider, however, is the 
%However, the entropy is a measure of ``how confused information is''. So intuitively, it should depend on the incomparibility of $P$. $H(\bar{P})$ is defined as the 
entropy of the {\em incomparability} graph of $P$, which Khan and Kim denote by $H(\bar{P})$ (note that $H(P)+H(\bar{P})=\log n$). They continue to show that the time it takes to sort a poset $P$ is lower-bounded by $\Omega(nH(\bar{P}))$.

%Motivated by the results of Khan and Kim, 
Cardinal \etal \cite{cardinal2013sorting} further investigate how to sort posets using this notion of entropy. They note that certain posets $P$ are induced by a set of intervals $\RR$; they call these \emph{interval orders}. Moreover, they show for every poset $P$, there exists an interval order $P'$ with $H(P) = H(P')$ (and hence also $H(\bar{P}) = H(\bar{P'})$). This allows them to approximate $H(\bar{P})$ for any poset $P$, by searching for a corresponding $P'$. 

%The entropy of a poset $H(P)$ is based on the comparability graph of $P$ and Cardinal \etal found a geometrical representation of this entropy $H(P)$ based on the intervals of $P$. 

\section{Building the depth partition}
\label{appx:depth}

We present the proof of Lemma~\ref{lemma:depth} in Section~\ref{sec:levelpermutation}, which states that for any set of intervals $\RR$ we can construct the {\em depth partition} $\mathcal{D}$ in $\Oh(n \log n)$ time.

\begin{proof}
To construct $D_1 \ldots D_m$ we process the intervals of \RR sorted by their \emph{left} endpoints from left to right. For each level $D_i$ we maintain the value $r_{i}$ as the maximum of the right endpoints of the intervals in $D_{i}$ and we maintain the invariant that $r_{i+1} \ge r_{i}$. Let $m$ be the (unknown) maximal level. Initially, we have $D_{m}$ as the empty set, no other sets and $r_{m} = - \infty$. We insert the first interval into $D_{m}$ and set  $r_{m}$ to be the right endpoint of the interval. 

\begin{figure}[b]
\centering
\includegraphics[page=1]{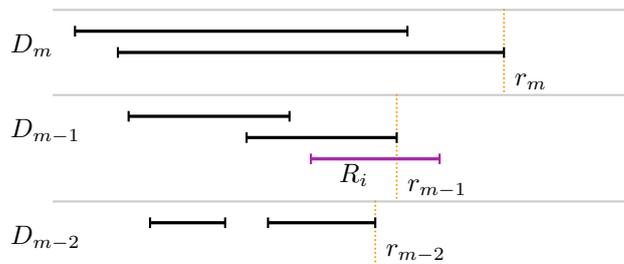}
\caption{An iteration of constructing the
depth partition. Intervals in black are already inserted. In this example, there are currently three levels, and a new interval $R$ is being inserted: $r_{m}$ is the only value greater than the right endpoint of $R$, thus $R$ is inserted into $D_{m-1}$.}
\label{fig:level-order}
\end{figure}

We then construct the remaining partition by iterating over the intervals in their sorted order. Consider the iteration where we are inserting an interval $R$ (refer to Figure~\ref{fig:level-order}). Let there be $\ell$ levels at this iteration $(D_m \ldots D_{m-\ell})$. We compare the right endpoint of $R$ denoted by $r$ with the values $r_m, r_{m-1}, \ldots, r_{m-\ell}$.
We find the minimal $j$ such that $r_j > r$ using binary search. All intervals in $D_j$ have a left endpoint left of $R$, so $R$ must be contained in an interval in $D_j$ and we therefore insert $R$ in the level $D_{j-1}$ and update $r_{j-1}$. This gives a partition where all intervals in a level $D_j$ have depth $m - j$ in the containment graph $G(\RR)$.
\end{proof}

\section{Building the height partition}
\label{appx:dominance}

In Section~\ref{sec:levelpermutation}, we introduced the {\em height partition} $\mathcal{H}$ as a natural partition of a set of intervals which would suit our needs, except for the fact that it is unclear how to compute it efficiently. We briefly expand on this here.

\begin{lemma}
\label{lemma:strongpartition}
For any set of intervals $\RR$, we can construct $\mathcal{H}$ in $\Oh(n \log^2 n)$ time.
\end{lemma}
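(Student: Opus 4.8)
The plan is to build the height partition $\mathcal{H}$ by processing intervals in rounds, where round $j$ extracts the intervals of height $j$, i.e. level $H_j$. First I would observe that $H_1$ is exactly the set of leaves of the containment graph $G(\RR)$: an interval is a leaf if and only if it contains no other interval of $\RR$. An interval $R_i$ contains $R_j$ iff the left endpoint of $R_j$ is to the right of the left endpoint of $R_i$ and the right endpoint of $R_j$ is to the left of the right endpoint of $R_i$; so ``$R_i$ contains some interval'' is a $2$-dimensional dominance question on the endpoint-pairs. After sorting the $2n$ endpoints once at the start (paying $\Oh(n\log n)$), I would maintain a balanced search tree keyed on left endpoints that supports, for a query interval $R$, deciding whether any currently-present interval has its left endpoint inside $R$ and its right endpoint inside $R$ — concretely a range tree or a priority-search-tree style structure on the remaining intervals answering the ``is there a contained interval'' query in $\Oh(\log n)$ time, with $\Oh(\log n)$ deletions.

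The key steps, in order: (1) Sort all endpoints; initialize a $2$-dimensional range-searching structure (e.g. a priority search tree, or a balanced BST augmented with subtree extremal values) over all $n$ intervals, so that for any query interval $R$ we can test in $\Oh(\log n)$ time whether some interval currently in the structure is strictly contained in $R$. (2) Repeatedly: scan the intervals currently in the structure, query each one to test whether it is a leaf (contains no present interval), collect all current leaves into the next level $H_j$, then delete all of them from the structure; increment $j$. (3) Stop when the structure is empty; the sequence $H_1,\dots,H_m$ produced is the height partition, because after removing levels $H_1,\dots,H_{j-1}$ an interval has height $j$ exactly when it becomes a leaf of the residual containment graph, which is precisely what the leaf-test on the shrunken structure detects. (4) Account for the running time: each interval is queried $\Oh(1)$ times per round it survives and is deleted exactly once; the subtle point is that an interval may survive many rounds, so a naive per-round rescan is too slow — instead I would, on each deletion, only re-query the (few) intervals whose ``is there a contained interval'' answer could change, namely those that directly contained a just-deleted interval. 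Charging each re-query to a containment edge of $G(\RR)$ gives $\Oh(|E(G(\RR))|\log n)$ which can be quadratic; to get $\Oh(n\log^2 n)$ I would instead re-examine only the immediate parents in a maintained parent-pointer forest, ensuring $\Oh(n)$ total re-queries each costing $\Oh(\log n)$, with the extra $\log$ factor coming from the dynamic $2$-dimensional structure updates.

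The main obstacle I expect is exactly this charging argument: making sure the total number of leaf-tests is $\Oh(n)$ rather than proportional to the (possibly quadratic) number of containment relations. The honest fix is to maintain, for each interval, a pointer to one currently-present child (a witness that it is not yet a leaf), updated lazily: when an interval is deleted we only need to revisit those intervals whose witness pointer pointed at it, re-query them once, and either find a new witness or promote them to the current level. Since each interval acquires and loses a witness pointer $\Oh(1)$ times over the whole process — it is promoted exactly once and each promotion of its witness triggers at most one re-query — the total work is $\Oh(n\log^2 n)$: $\Oh(n)$ queries and $\Oh(n)$ deletions in a structure with $\Oh(\log^2 n)$ (or $\Oh(\log n)$, depending on the chosen data structure) operation cost, plus the initial $\Oh(n\log n)$ sort. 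The remaining routine check is that levels are emitted in the correct order and that ``leaf of the residual graph'' coincides with ``height $j$'', which follows directly from the definition of height as minimal distance to a leaf in $G(\RR)$.
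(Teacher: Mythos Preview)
Your leaf-stripping approach is quite different from the paper's, which instead processes the intervals in order of increasing width and, for each interval $(c,d)$, queries a dynamic two-dimensional range-max structure for the maximum height among the already-processed intervals contained in $(c,d)$; the height of $(c,d)$ is then one more than this maximum, and $(c,d)$ is inserted into the structure as the point $(c,d)$ carrying its height as the value to be maximised. Each interval incurs exactly one query and one insertion, each costing $\Oh(\log^2 n)$, so the total is $\Oh(n\log^2 n)$ with no amortisation needed.

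Your proposal has a genuine gap in the charging step. The claim that ``each interval acquires and loses a witness pointer $\Oh(1)$ times'' is not supported by the sentence that follows it: yes, each promotion of $R$'s current witness triggers one re-query of $R$, but nothing bounds how many distinct witnesses $R$ cycles through before it is itself promoted. Take a nested chain $R_1\supset R_2\supset\cdots\supset R_n$ and suppose the containment query, when asked for some interval inside $R_i$, always returns the innermost surviving one. Then initially every $R_i$ with $i<n$ has witness $R_n$; when $R_n$ is deleted in round~$1$, all $n-1$ of them are re-queried and each adopts $R_{n-1}$; when $R_{n-1}$ is deleted in round~$2$, the remaining $n-2$ are re-queried; and so on, giving $\Theta(n^2)$ re-queries in total. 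To salvage the scheme you would have to fix a witness-selection rule and prove that the total number of witness changes over all intervals is $\Oh(n)$; the most natural rule---always pick a contained interval of maximum current height---is exactly the range-max query the paper uses, at which point a single narrow-to-wide sweep computing heights directly is both simpler and avoids the amortisation issue altogether.
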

\begin{proof}
Observe that an interval $(x_1,x_2)$ contains $(y_1,y_2)$ if and only if $x_1 \le y_1 \wedge y_2 \le x_2$. We use this information plus a 3-dimensional dynamic range tree \cite{de2008computational} to construct the height partition. We sort the intervals from narrow to wide and insert them into the correct level in this order. The least wide interval $(a,b)$ cannot contain an interval of $\RR$ so we store this interval in $S_1$ and we insert it in the dynamic  range tree as the 3-dimensional point $(a, b, 1)$.

Consider the iteration where we process an interval $(c, d)$. By this time we have already processed all intervals which could be contained in $(c,d)$.  We query the range tree with the following range: $(c, \infty) \times (-\infty, d) \times (-\infty, \infty)$ and we find the interval in this range with the maximal $z$-coordinate in $\Oh(\log^2 n)$ time. This gives us the interval $(e, f)$ which of all intervals contained in $(c,d)$, is stored in the highest level $S_j$. Thus, $(c,d)$ contains no intervals in $\RR \backslash S_{\le j}$ and must be stored in level $S_{j+1}$. Lastly we insert the point $(c,d, j+1)$ into the range tree in $\Oh(\log^2 n)$ time and we continue the iteration.
\end{proof}

Let for an interval $(c,d)$, $R_{\mid (c,d)}$ be the intervals in $\RR$ that are contained in $(c,d)$. During the construction of the height partition we want for $(c,d)$ to find the interval in $R_{\mid (c,d)}$ that is stored in the highest level. We project each interval $(a,b) \in R_{\mid (c,d)}$ to the point $(a,b,j)$ where $j$ is the level of $(a,b)$. We then perform a 3-dimensional range query on the range: $(c, \infty) \times (-\infty, d) \times (-\infty, \infty)$ to find the interval on this domain with the maximal $z$-coordinate. This leads to an interesting open problem which we will call \emph{dynamic $2.5$-queries}:

Let $\RR$ be a set of $n$ intervals where each interval $R_i$ has a weight $w_i$. Can we dynamically maintain a linear-size data structure on $\RR$, with $\Oh(\log n)$ update time that can answer the following query in logarithmic time: for an interval $(c,d) \in \RR$, what is the interval in $R_{\mid (c,d)}$ with the maximal weight?

Range queries with the range: $(q_1, \infty) \times (q_2, \infty) \ldots \times (q_k, \infty)$ are called $k$-dimensional \emph{dominance queries}. Range and dominance queries have 3 variants: reporting, counting and max. Here in the first 2 variants the goal is to report or count all the points within the range and where with the latter the goal is to return the maximal point within the range for some definition of maximal.

The query we posed above lies somewhere between a 2-dimensional and a 3-dimensional dominance max-query since the third dimension does not have a specified range. 2-dimensional dominance queries can be solved with a dynamic linear-size data structure with $\log n$ update and query time. Saxena \cite{saxena2009dominance} shows how static 3-dimensional dominance reporting can be solved using $\Oh(n \log n)$ space and construction time with $\Oh(\log n + S)$ query time where $S$ is the size of the output. This approach can be easily adapted to provide the element with the maximal $z$-coordinate instead but there is few hope for linear space and $\Oh(\log n)$ update time. At a similar time, Afshani \cite{afshani2008dominance} proposed a static data structure that answers 3-dimensional dominance reporting with $\Oh(n \log n)$ construction time, linear space and $\Oh(\log n + S)$ query time where $S$ is the size of the output. However, their approach cannot be adapted to dominance-max queries since it is based on an amortized-analysis using the size $S$. Our problem has more ``freedom'' than a full 3-dimensional range query.

\section{Basic properties of quadtrees}
\label{appx:basic}

Here we recall two basic properties of $d$-dimensional quadtrees, in particular compression and smoothness.
A quadtree $T$ on a real-valued point set $X$ is \emph{compressed} so that the quadtree has linear space. A smooth quadtree is a quadtree where each leaf is comparable in size to its adjacent leaves and smoothness is a prerequisite for having neighbor pointers between adjacent leaves.

\subparagraph*{Compression.}

Depending on $\mathcal{B}$ and the point set $X$, the quadtree on $(\mathcal{B}, X)$ does not always have linear size (Figure~\ref{fig:imbalance}). If two points of $X$ lie close, we potentially require an unbounded number of splits before they lie within a unique cell. The remedy for this is to use $\alpha$-\emph{compressed quadtrees}~\cite{har2011geometric}. For any quadtree $T$, if there exists a path of cells $v_1, v_2 \dots v_i$ with $i$ greater than a constant $\alpha$, such that the cells on this path contain only points which lie in $v_i$, then we do not explicitly construct this path. Instead, $v_1$ directly becomes the parent cell of the much smaller $v_i$. For any point set $P$, there exists a linear-size $\alpha$-compressed quadtree and this can be made dynamic with $\Oh(\alpha)$ update time~\cite{har2011geometric}.

\subparagraph*{Smoothness.}

A quadtree is {\em smooth} if each leaf is comparable in size to its adjacent leaves. It has been long recognized that smooth quadtrees are useful in many applications~\cite {BernEpGi94},
and smooth quadtrees can be computed in linear time (and have linear complexity) from their non-smooth counterparts~\cite[Theorem~14.4]{de2008computational}. 
In the previous sections, we used pointers between leaves of our AVL-tree to facilitate fast point location. Suppose you want to dynamically maintain a $d$-dimensional quadtree $T$, with $d > 1$, where each leaf in $T$ has pointers to adjacent leaves. Bennet and Yap \cite{BY17} show that this is possible with constant update time if and only we maintain a smooth quadtree. Hoog \etal \cite[Theorem 20]{hoog2018dynamic} show how to dynamically maintain a smooth compressed quadtree with constant update time.

\section{Amortized-constant balancing of an AVL tree}
\label{appx:avl}

In Section~\ref{sec:algorithm}, we need to maintain a dynamic balanced AVL tree in amortized constant time per update, even when the initial tree contains more elements than the number of updates we will perform.
Hence, a traditional amortization scheme which assumes the tree is build using update operations, does not immediately apply.
We show here that if we construct an initial AVL tree which is as {\em imbalanced} as possible (while remaining a valid AVL tree), we can in fact perform all future updates in contant time amortized only over the number of future updates.

We present the proof of Lemma~\ref{lemma:amortized}, which states that when $T$ is an AVL tree for which every internal node has two subtrees with a depth difference of 1, we can dynamically maintain the balance of $T$ in amortized $\Oh(1)$ time.

\begin{proof}[Proof of Lemma \ref{lemma:amortized}]

An AVL tree is a binary search tree where each inner node $v \in T$ has a balance constant $b(v) \in \{-1, 0, +1\}$. If $b(v) = -1$, the left subtree of $v$ has a depth 1 greater than the right subtree, if $b(v) = 0$ then both subtrees of $v$ have equal size and else $b(v) = +1$. 
Let $x$ be a value that we want to insert into $T$ and $v$ be the leaf that would become the parent of $x$. An insertion replaces $v$ with a decision node that is the parent of $x$ and $v$. Melhorn and Tsakalidis \cite{mehlhorn1986amortized} note that an insertion can recursively change balance constants all the way to the root. In particular, they observe that an insert will change a path of $0$'s into $+1$ or $-1$ until it reaches a node $\bar{v}$ where it will change $b(\bar{v})$ from $\{-1, +1\}$ to $0$ with a so called \emph{terminating action}: either the insert triggers a rotation around $\bar{v}$, or the insert guarantees that the left and right subtree of $\bar{v}$ are equal size. In both cases, the parent of $\bar{v}$ does not have to adjust its balance constant and the balancing process terminates. 

Our tree $T$ started out a tree where all the balance constants were $-1$ and $+1$. This means that whenever we insert a new value in $T$, it might change arbitrarily many $0$-nodes into $+1$ or $-1$, but for each of these nodes, there was a unique insert that created the $0$ value. This implies that performing all the balance changes takes amortized constant time.
\end{proof}

\section{$d$-dimensional quadtrees of arbitrary disks}
\label{appx:disks}

In Section \ref{sec:quadtrees} we show how to preprocess a set $\RR$ of uncertainty regions consisting of unit disks, such that in the reconstruction phase we can construct a $K_d$-deflated quadtree on the underlying point set $X$ in $\Theta(\SO(\RR))$ time. One key observation we make, is that if $\RR$ is a set of $d$-dimensional unit disks (for any metric), then any permutation $\pi$ on $\RR$ is a containment-compatible permutation. 
Suppose that $\RR$ is an arbitrary set of $d$-dimensional disks, then it is not clear if a containment-compatible permutation can be computed in $\Oh(n \log n)$ time (In Section \ref{sec:sorting} we used range queries to find this permutation but this is not fast enough in $d \ge 3$). Even if it would be possible to find a containment-compatible permutation $\pi$, then it still would not be possible to reconstruct a $\lambda$-deflated quadtree of $X$ in $\SO(\RR)$ time if the ambiguity of $\RR$ is below $\log n$:

\begin{theorem}
Let $\lambda$ be a constant. There exists a collection of discs in two dimensions $\RR$ with an $\SO(\RR) = \lambda$ with a corresponding point set $X$ such that reconstructing a $\lambda$-deflated quadtree on $X$ takes at least $\Omega(\log n)$ time.
\end{theorem}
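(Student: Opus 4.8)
The plan is to build a lower-bound instance that hides a single binary-search-worth of information inside a set of regions whose ambiguity is only a constant. First I would place $n$ small disjoint disks in the plane along a horizontal line, with centers at positions that are themselves far apart relative to the disk radii; these will be the bottom regions, contributing nothing to the ambiguity. The true points $X$ will lie one per small disk, and because the disks are disjoint and can be resolved by the preprocessing, the ``hard'' part of the reconstruction cannot come from them. Next I would add a single additional region $R_0$ — a disk (or, if we want $\SO(\RR)=\lambda$ exactly rather than $1$, a small constant number $\lambda$ of nested regions) — that overlaps all $n$ of the small disks. Then $|\CS_0| = n+1$ (or the nested regions give contact sets of sizes up to roughly $n$), so $\SO(\RR) = \log(n+1) \cdot (\text{const})$; to force $\SO(\RR)$ down to the constant $\lambda$ I would instead make $R_0$ overlap only $2^{\lambda}$ of the small disks, or more simply note that the theorem as stated allows $\SO(\RR)=\lambda$ and we only need $\Omega(\log n) > \lambda$, so we can keep $n$ large and the overlapping region covering a constant number — the cleanest version is: $R_0$ overlaps exactly the $n$ small disks but we observe $\log n$ of those overlaps are ``free'' only in one particular permutation; I would pick the construction so that $\CS$ sizes stay bounded by a constant in the optimal permutation. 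Concretely: arrange the $n$ small disks in a line, and let the extra region be the interval-like disk covering a \emph{constant}-length window; translate copies so every window has $O(1)$ disks — then no single added region creates a large contact set, the ambiguity is $O(1)=\lambda$, yet the adversary still gets to choose $x_0 \in R_0$ anywhere in its window.

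The core of the argument is an adversary lower bound on the reconstruction phase itself, independent of preprocessing. After the (arbitrary, fixed) preprocessing has produced $\Xi$, the reconstruction algorithm is handed $X$ and must output a $\lambda$-deflated quadtree $T$ in which the combinatorial cell structure is correct. I would argue that to place the single point $x_0 \in R_0$ into the correct leaf cell of a $\lambda$-deflated quadtree, the algorithm must in effect perform a search that distinguishes among $\Omega(n)$ possible outcomes: by choosing the radius of $R_0$ and the spacing of the small disks appropriately, $R_0$ can be made to span $\Omega(n)$ distinct quadtree cells at the relevant scale, and the adversary can place $x_0$ in any of them. Since the output $T$ has bounded degree and each unit of work (each node creation, each pointer traversal, each comparison) yields only $O(1)$ bits of information about which cell $x_0$ lands in, an information-theoretic / decision-tree argument gives an $\Omega(\log n)$ lower bound on the number of operations in the reconstruction phase for this instance. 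This is the standard ``sorting/searching needs $\log$ of the number of alternatives'' argument, localized to the single free point.

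The main obstacle I anticipate is reconciling two things that pull in opposite directions: we need $R_0$ (or the constantly-many extra regions) to induce $\Omega(n)$ genuinely distinct reconstruction outcomes — which naively wants $R_0$ large — while keeping $\SO(\RR)$ equal to the constant $\lambda$ — which wants $R_0$ to overlap only $O(1)$ other regions. The resolution is that the number of quadtree cells spanned by $R_0$ is governed by the \emph{ratio} of $R_0$'s diameter to the spacing/size of the nearby cells, not by how many input disks $R_0$ overlaps: by making the small disks tiny and clustered near one corner of $R_0$ (so $R_0$ overlaps only $O(1)$ of them, keeping contact sets small) while $R_0$ itself is large enough to straddle $\Omega(n)$ potential cells at the scale where $x_0$ must be separated from its cluster-neighbours, we get ambiguity $O(1)$ and search cost $\Omega(\log n)$ simultaneously. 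I would also need to check that the preprocessing genuinely cannot help: because $\Xi$ is built without access to $X$, it cannot pre-locate $x_0$, and the $\Omega(\log n)$ distinct positions of $x_0$ in $R_0$ all induce the same $\RR$ and hence the same $\Xi$, so the adversary's choice is unconstrained — this is where the ``independent of preprocessing'' strength of the bound comes from, and it is exactly the same phenomenon that makes $\Omega(\SO(\RR))$ a valid lower bound in the earlier sections.
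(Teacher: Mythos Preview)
Your construction does not force the reconstruction algorithm to perform $\Omega(\log n)$ work. A $\lambda$-deflated quadtree is valid as long as no leaf contains more than $\lambda$ points; it need not be minimal, and the algorithm need not explicitly locate every point in a leaf. With a single extra region $R_0$ (or $\lambda$ nested ones) overlapping only $O(1)$ of the small disks, the adversary controls at most $\lambda$ free points, and these can share a cell with at most $O(1)$ small-disk points. Hence an overflow (more than $\lambda$ points in one leaf) can occur in at most $O(1)$ candidate cells, and the preprocessed quadtree can be made valid for every placement of the free points with $O(1)$ additional work. Your ``$x_0$ can land in $\Omega(n)$ distinct cells'' argument does not bite: placing $x_0$ in an otherwise empty leaf changes nothing structurally, so those $\Omega(n)$ placements do not correspond to $\Omega(n)$ distinct output trees.

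The paper's construction supplies the two ingredients you are missing. First, it uses $\lambda$ \emph{identical} large disks rather than one (or nested ones): this lets all $\lambda$ free points be placed at the same location, so that together with one tiny-disk point they create a cell with $\lambda+1$ points that \emph{must} be split. Second, it spreads the $n-\lambda$ tiny disks \emph{along the entire boundary} of the large disks rather than clustering them at a single corner. Each tiny disk sits just outside the large disks --- so the regions themselves are disjoint and the ambiguity stays constant --- yet each tiny disk shares a small quadtree cell with the interior of the large disks (the cell straddles the boundary). This yields $\Omega(n)$ candidate cells in which the forced split can occur, and identifying which one reduces to a binary search along the boundary, giving the $\Omega(\log n)$ bound. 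Your clustering-at-a-corner idea produces only $O(1)$ such shared cells; spreading the tiny disks along a one-dimensional boundary inside a two-dimensional domain is precisely the geometric phenomenon that separates this case from the 1D setting handled earlier in the paper.
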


\begin{proof}
We prove the theorem by constructing an example for arbitrary $\lambda$ illustrated by Figure~\ref{fig:2dlowerbound}.
\end{proof}

\begin{figure}[t]
\centering
\includegraphics{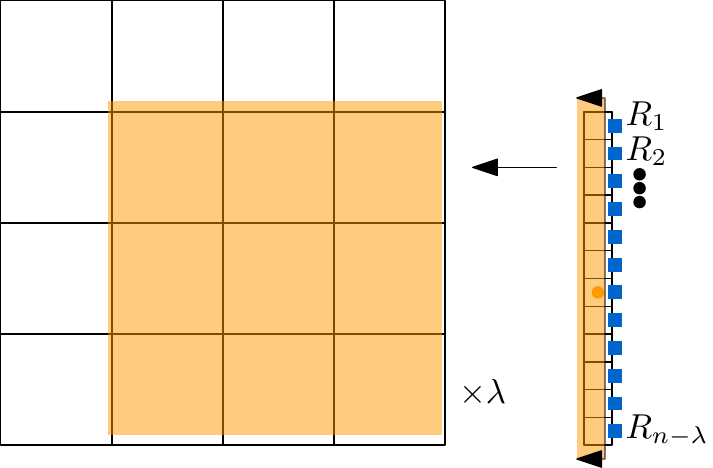}
\caption{This figure shows $\lambda$ identical disks $\Lambda \subset \RR$ in the $L_1$-metric and their neighborhood. The construction for the euclidean metric is the same but harder to illustrate. The arrow zooms in on the border of the $\lambda$ disks. There we see $n - \lambda$ mutually disjoint tiny disks, whose neighborhoods are all intersected by $\Lambda$. Suppose that all the $\lambda$ true values of $\Lambda$ lie within the same quadtree cell in the neighborhood of a blue disk, and that the true value of the blue disk also lies in that cell. Then there now exists a quadtree cell in $T$, where there are $\lambda + 1$ points of $X$ in a single leaf cell. If we want to construct a $\lambda$-deflated quadtree on $X$, then our only option is to locate and split this cell. But this location reduces to binary search on the $y$-coordinate of these true locations. Thus, constructing the $\lambda$-deflated quadtree on $X$ is lower-bounded by $\Omega(\log n)$ even though the ambiguity is constant. }
\label{fig:2dlowerbound}
\end{figure}

\end{document}